\newcommand{\tens}[1]{\mathsf{#1}}
\renewcommand{\vec}[1]{\mathbf{#1}}
\newcommand{\mat}[1]{\mathbf{#1}}
\newcommand\qedsymbol{\hbox{\rlap{$\sqcap$}$\sqcup$}}
\newtheorem{theorem}{Theorem}[section]
\newtheorem{definition}[theorem]{Definition}
\newtheorem{corollary}[theorem]{Corollary}
\newtheorem{lemma}[theorem]{Lemma}
\theoremstyle{plain}
\newtheorem{remark}[theorem]{Remark.}
\newenvironment{proof}{Proof.}{\hfill\qedsymbol\newline}
\newcommand{\proofthname}{Proof of the theorem}
\newenvironment{prooftheorem}{\proofthname}{\hfill\qedsymbol\newline}
\newcommand{\im}{\mathop{\mathrm{Im}}}
\newcommand{\re}{\mathop{\mathrm{Re}}}
\newcommand{\D}{\mathrm{D}}
\renewcommand{\Sigma}{\mathfrak{S}}
\newcommand{\I}{\mathrm{I}}
\newcommand{\upi}{\uppi}
\author{M. Seredy\'{n}ska\\
Institute of Fundamental Technological Research\\
Polish Academy of Sciences\\
 ul. Pawi\'{n}skiego 5b\\
02-106 Warszawa, PL\\
email: {\tt msered@ippt.gov.pl}\\ \vspace{0.2cm}\\
A. Hanyga\\
ul. Bitwy Warszawskiej 14 m. 52\\
02-366 Warszawa, PL\\
email: {\tt ajhbergen@yahoo.com}
}
\title{Positive solutions of viscoelastic problems} 
\begin{document}

\maketitle

\noindent\textbf{Keywords:} viscoelasticity, Newtonian viscosity, relaxation, 
completely monotone function, complete Bernstein function, MacDonald function 
\begin{abstract}
In 1,2 or 3 dimensions a scalar wave excited by a 
non-negative source
 in a viscoelastic medium with a non-negative relaxation spectrum or 
a Newtonian response or both combined inherits the sign of the source. The
key assumption is a constitutive relation which involves the sum of 
a Newtonian viscosity term and a memory term with a completely monotone relaxation kernel.
In higher-dimensional spaces this result holds for sufficiently regular sources. 
Two positivity results for vector-valued wave fields including isotropic viscoelasticity are also obtained.
\end{abstract}

\vspace{0.5cm}

\noindent\textbf{Notation.}

\noindent $[a,b\,[\; := \{ x \in \mathbb{R} \mid a \leq x < b \}$;\\
$\mathbb{R}_+ =\; ]0,\infty[\;$;\\
$\mat{I}$: unit matrix; \\
$\langle \vec{k}, x\rangle := \sum_{n=1}^d k_l\, x^l$\,;\\
$\theta(y) := \left\{ \begin{array}{ll} 1, & y > 0 \\
0,  & y < 0 \end{array} \right\} $;\\
$\tilde{f}(p) := \int_0^\infty \e^{- p y} \, f(y) \, \dd y$.

\section{Introduction}

Positivity of viscoelastic pulses was studied in a paper of \cite{Duff69}. 
Duff assumed a special model with a rational complex modulus. Duff's 
models are however loosely related to viscoelasticity and his assumptions are 
excessively restrictive.

In this
paper a general scalar viscoelastic medium with the constitutive equation 
$\sigma = a \, \dot{e} + G(t)\ast \dot{e}$ with a completely monotone
relaxation modulus $G$ and a non-negative Newtonian viscosity coefficient 
is studied. We show that a scalar viscoelastic wave field 
propagating in a $d$-dimensional medium and 
excited by a non-negative pulse is also non-negative provided $d \leq 3$.
For higher dimensions and for non-zero initial data only wave fields excited by 
sufficiently regular sources are non-negative.

Positivity of 
viscoelastic signals can be considered as a test for the 
non-negative relaxation spectrum and for the presence of the Newtonian 
viscosity.

Positivity can be extended to matrix-valued fields, e.g. to Green's functions
of systems of PDEs. In Sec.~\ref{sec:vector} we consider a system of PDEs resembling the equations 
of motion of viscoelasticity with a CM relaxation kernel and prove that the Green's
function of this system of equations is positive-semidefinite. This result 
does not apply to general viscoelastic Green's functions, which involve 
double gradients of positive semi-definite functions. In iso

%%%%%%%%%%%%%%%%%%%%%%%%%%%%%%%%%%%%%%%%%%%%%%%%%%%%%%%%%%
%It is particularly interesting to find out how the same constitutive assumptions affect vector-valued
%viscoelastic displacement fields. In this case the very statement of positivity must 
%involve matrix-valued Green's functions. It is proved in the last section that in a
%special anisotropy class the Green's function is positive semi-definite. This fact has an important
%consequence for vector-valued viscoelastic displacement fields:
%the angle between the displacement vector and the polarization vector of the point
%source is acute or right.
%%%%%%%%%%%%%%%%%%%%%%%%%%

\section{Statement of the problem}

In a hereditary or Newtonian linear viscoelastic medium a scalar field 
excited by positive source is non-negative. This applies to displacements 
in pure shear or to scalar displacement potentials. The key assumption about 
the material properties of the medium is a positive relaxation spectrum.
The result holds for arbitrary spatial dimension. 

We consider the problem:
\begin{equation} \label{eq:problem}
\rho\, \D^2 \, u = a \, \nabla^2 \, \D\, u + G(t)\ast \nabla^2\,\D u + 
s(t,x)
 \qquad t \geq 0, \quad x \in \mathbb{R}^d 
\end{equation}
with $s(t,x) = \theta(t)\,(c_1 + c_2 \, t) \,\delta(x)$ and the 
initial condition 
\begin{equation} \label{eq:ini}
u(0,x) = u_0\, \delta(x), \quad \D u(0,x) = \dot{u}_0 \,\delta(x),  
\end{equation}
(Problem~I) as well as $s(t,x) = c\,\delta(t) \, \delta(x)$ with a solution assumed to 
vanish for $t < 0$ (Problem~II).
It is assumed that $a \geq 0$ and $G$ is a completely monotone (CM) function.

The Laplace transform 
\begin{equation}
\tilde{u}(p,x) := \int_0^\infty \e^{-p t}\, u(t,x)\, \dd t, \qquad \re p > 0,
\quad x \in \mathbb{R}^d
\end{equation}
satisfies the equation
\begin{equation} \label{eq:Laplacetransformed}
\rho\, p^2 \, \tilde{u}(p,x) = Q(p) \, \nabla^2\, \tilde{u}(p,x) 
+ g(p) \, \delta(x)
\end{equation}
where
\begin{equation} \label{eq:defQ}
Q(p) := a \, p + p \, \tilde{G}(p)
\end{equation}
The function $g$ is defined by the equation
\begin{equation} 
g_I(p) = \frac{1}{p} + p\, u_0 + \dot{u}_0
\end{equation}
in Problem~I and
\begin{equation} 
g_{II}(p) = 1
\end{equation}
in Problem~II.

\section{Basic mathematical tools.}
\label{sec:tools}

The classes of functions appropriate for viscoelastic responses are 
reviewed in detail in \citet{HanSerLANL}.

\begin{theorem}
If the function $\tilde{u}(\cdot,x)$ is completely monotone for every 
$x \in \mathbb{R}^d$, then $u(t,x) \geq 0$ for every  $t \geq 0$ and
$x \in \mathbb{R}^d$.
\end{theorem}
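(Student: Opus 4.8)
The plan is to recognize this as an application of Bernstein's theorem on completely monotone functions, combined with the uniqueness and injectivity of the Laplace transform.

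First I would recall Bernstein's theorem: a function $\phi:\,]0,\infty[\,\to\mathbb{R}$ is completely monotone if and only if it is the Laplace transform of a non-negative (Radon) measure, i.e. $\phi(p) = \int_0^\infty \e^{-p t}\,\dd\mu(t)$ for some non-negative measure $\mu$ on $[0,\infty[$. Applying this pointwise in $x$: by hypothesis, for each fixed $x\in\mathbb{R}^d$ the function $p\mapsto\tilde{u}(p,x)$ is completely monotone, so there is a non-negative measure $\mu_x$ with $\tilde{u}(p,x) = \int_0^\infty \e^{-pt}\,\dd\mu_x(t)$ for all $\re p>0$ (the extension from real $p$ to the right half-plane being automatic by analytic continuation, both sides being holomorphic there).

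Next I would invoke the injectivity of the Laplace transform. The function $u(t,x)$ is, by construction, the object whose Laplace transform in $t$ equals $\tilde{u}(p,x)$; since $\tilde{u}(p,x) = \widetilde{\dd\mu_x}(p)$ as well, uniqueness of the Laplace transform forces the distribution $u(\cdot,x)$ to coincide with the measure $\mu_x$, which is non-negative. Hence $u(t,x)\geq 0$ for (almost) every $t\geq 0$, for each $x$. One should note the mild implicit regularity assumption that $u(\cdot,x)$ is a function (or at least a measure) for which the Laplace transform is defined and the inversion is valid; under that standing assumption the conclusion $u(t,x)\ge 0$ for every $t\ge 0$ and $x\in\mathbb{R}^d$ follows.

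The main obstacle — really the only subtle point — is the passage from ``$\tilde u(\cdot,x)$ is CM'' to an honest statement about $u$ itself, i.e. making sure the Laplace inversion is legitimate and that the representing measure $\mu_x$ is genuinely the inverse transform of $\tilde u(\cdot,x)$ rather than some other object sharing the same transform on a smaller set. This is handled by analytic continuation (so that the two transforms agree on the whole half-plane $\re p>0$, not just on the positive real axis) together with the standard uniqueness theorem for Laplace transforms of measures/tempered distributions. Everything else is a direct citation of Bernstein's theorem.
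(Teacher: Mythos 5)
Your argument is exactly the paper's: apply Bernstein's theorem to represent $\tilde{u}(\cdot,x)$ as the Laplace transform of a positive Radon measure, then use uniqueness of the Laplace transform to identify that measure with $u(t,x)\,\dd t$, whence $u \geq 0$ (the paper invokes continuity of $u(\cdot,x)$ to pass from ``almost every $t$'' to ``every $t$,'' matching the regularity caveat you flag). The proposal is correct and essentially identical to the paper's proof.
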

\begin{proof}
If $\tilde{u}(\cdot,x)$ is completely monotone, then, in view of Bernstein's
theorem \citep{WidderLT}, for every $x \in \mathbb{R}^d$ it is the Laplace
transform of a positive Radon measure $m_x$:
\begin{equation} 
\tilde{u}(\cdot,x) = \int_{[0,\infty[} \e^{-p s}\, m_x(\dd s)
\end{equation}
The Radon measure $m_x$ is uniquely determined by $\tilde{u}(\cdot,x)$,
hence $m_x(\dd t) = u(t,x)\, \dd t$ is a positive Radon measure. Hence,
in view of continuity of $u(\cdot,x)$, we have the inequality 
$u(t,x) \geq 0$  for all $t \geq 0$ and $x \in \mathbb{R}^d$.  
\end{proof}

The problem of proving that $u(t,x)$ is non-negative is thus reduced to 
proving that $\tilde{u}(\cdot,x)$ is completely monotone. The crucial 
step here is the realization that $Q$ in \eqref{eq:defQ}
is a complete Bernstein function.
We shall therefore recall some facts about Bernstein and complete 
Bernstein functions and their relations to 
completely monotone functions.
\begin{definition}
A function $f$ on $\mathbb{R}_+$ is said to be \emph{completely monotone} (CM) if
it is infinitely differentiable and satisfies the infinite set of inequalities:
$$ (-1)^n \, \D^n \, f(y) \geq 0  \qquad y > 0, \quad 
\text{for all non-negative integer $n$}$$
\end{definition}
It follows from the definition and the Leibniz formula that the
product of two CM functions is CM.
A CM function can have a singularity at 0. 
\begin{definition}
A function $f$ on $\mathbb{R}_+$ is said to be \emph{locally integrable 
completely monotone} (LICM) if it is CM and integrable over the segment $]0,1]$.
\end{definition}
\begin{definition}
A function $f$ on $\mathbb{R}_+$ is said to be a \emph{Bernstein function} (BF) if
it is non-negative, differentiable and its derivative is a CM function.
\end{definition}
Since a BF is non-negative and non-decreasing, it has a finite limit at 0.
It can therefore be extended to a function on $\overline{\mathbb{R}_+}$.  

Every CM function $f$ is the Laplace transform of a positive Radon measure:
\begin{theorem}(Bernstein's theorem,\citet{WidderLT}) \label{thm:Bernstein}
\begin{equation}  \label{eq:Bernstein}
f(t) = \int_{[0,\infty[}\; \e^{-r t} \, \mu(\dd r)
\end{equation} 
\end{theorem}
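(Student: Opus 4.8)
The plan is to prove the only assertion present, namely that a completely monotone $f$ on $\mathbb{R}_+$ \emph{is} a Laplace transform: from such an $f$ I would construct a positive Radon measure $\mu$ on $[0,\infty[$ with $f(t)=\int_{[0,\infty[}\e^{-rt}\,\mu(\dd r)$. (Uniqueness of $\mu$, though not part of the statement, would follow from injectivity of the Laplace transform.) One could argue softly via Choquet theory — the suitably normalised cone of CM functions is compact convex, with the pure exponentials $t\mapsto\e^{-rt}$ as its extreme rays — but I would instead give the classical elementary argument through Hausdorff's moment problem, which is self-contained and produces explicit approximants. The first move is to quarantine the possible singularity of $f$ at $0$: for each $a>0$ the shift $f_a(t):=f(t+a)$ is CM on $[0,\infty[$ and, being non-increasing and non-negative, is bounded by $f_a(0)=f(a)<\infty$; so it suffices to represent every $f_a$ and then let $a\downarrow0$.

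Fix $a>0$ and a step $h>0$, and set $c_n:=f_a(nh)$, $n\ge0$. By the mean value theorem for iterated differences, $\Delta_h^k f_a(x)=h^k f_a^{(k)}(\xi)$ for some $\xi\in\,]x,x+kh[\,$, so complete monotonicity of $f$ forces $(-1)^k\Delta_h^k c\ge0$ for every $k$; that is, $(c_n)_{n\ge0}$ is a Hausdorff moment sequence. By Hausdorff's theorem there is a positive measure $\nu_h$ on $[0,1]$, of total mass $c_0=f(a)$, with $f(a+nh)=\int_{[0,1]}x^n\,\nu_h(\dd x)$ for all $n$. Substituting $x=\e^{-hr}$ converts $\nu_h$ into a positive measure $\lambda_h$ on $[0,\infty]$, again of total mass $f(a)$, with $f(a+nh)=\int_{[0,\infty]}\e^{-nhr}\,\lambda_h(\dd r)$: so $f_a$ restricted to the lattice $h\,\mathbb N$ is already a Laplace transform.

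Now let $h$ run through $2^{-m}$ as $m\to\infty$, so the lattices nest and their union is dense in $[0,\infty[$. The total masses $f(a)$ do not depend on $m$, so by vague (Helly-type) compactness on the compact space $[0,\infty]$ a subsequence of $\{\lambda_{2^{-m}}\}$ converges to a positive measure $\mu_a$. Passing to the limit in $f(a+nh)=\int\e^{-nhr}\lambda_h(\dd r)$ along dense lattice points (the integrand $r\mapsto\e^{-tr}$ is continuous on $[0,\infty]$), and then using continuity of both sides in $t$, gives $f(a+t)=\int_{[0,\infty]}\e^{-rt}\,\mu_a(\dd r)$ for all $t\ge0$; comparing the case $t\downarrow0$ with the continuity of $f$ at $a$ shows that $\mu_a$ carries no mass at $r=\infty$, hence is a genuine Radon measure on $[0,\infty[$. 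Rewriting, $f(s)=\int_{[0,\infty[}\e^{-rs}\,[\e^{ra}\mu_a(\dd r)]$ for $s\ge a$. Finally, the weighted measures $\mu^{(a)}(\dd r):=\e^{ra}\mu_a(\dd r)$ are Radon measures on $[0,\infty[$ whose Laplace transforms all coincide with $f$ on overlapping half-lines, so by injectivity of the Laplace transform (applied, after an exponential weighting, to finite measures) they are one and the same measure $\mu$, and $f=\mathcal{L}\mu$ on $\bigcup_{a>0}[a,\infty[\;=\;\mathbb{R}_+$, which is the assertion.

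The parts I expect to be genuinely delicate are the passage to the continuum and the patching: ensuring in the vague limit that no mass escapes to $r=\infty$ (or, in a variant without the shift, to $r=0$), justifying the interchange of limit and integral on the exponential tails and near $t=0$, and — for the patching step — pinning down the exact injectivity statement needed, namely that a possibly infinite-mass Radon measure whose Laplace transform is finite on a half-line is determined by that transform. The discretisation and the citation of Hausdorff's moment theorem in the middle step are, by contrast, routine once the set-up is in place.
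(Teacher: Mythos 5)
The paper does not prove this statement at all: Theorem~\ref{thm:Bernstein} is quoted as a classical result with a citation to Widder, so there is no in-paper argument to compare yours against. Your proof via the Hausdorff moment problem is correct and is essentially the classical route (Hausdorff's, reproduced in Widder's book and in modern treatments of Bernstein functions): the finite-difference mean value theorem turns complete monotonicity of $f_a$ into complete monotonicity of the sequence $f(a+nh)$, Hausdorff's theorem gives a representing measure on $[0,1]$ of mass $f(a)$, the substitution $x=\mathrm{e}^{-hr}$ transports it to the compactified half-line, and Helly compactness plus density of the dyadic lattices yields the representation of $f_a$. The two genuinely delicate points you flag are handled correctly: no mass escapes to $r=\infty$ because $\mu_a(\{\infty\})$ would create a jump of $\int\mathrm{e}^{-tr}\mu_a(\mathrm{d}r)$ at $t=0^+$ that the continuity of $f$ at $a$ forbids, and the patching of the reweighted measures $\mathrm{e}^{ra}\mu_a(\mathrm{d}r)$ uses exactly the right uniqueness statement (a Radon measure whose Laplace transform is finite on a half-line is determined by it, since after an exponential damping it becomes a finite measure). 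Your shift-and-patch construction is also what the paper implicitly needs later: for a CM function singular at $0$ (the LICM case, inequality~\eqref{eq:ineq}) the representing measure is a genuine, possibly infinite, Radon measure on $[0,\infty[$ rather than a finite one, and your argument produces precisely that. The only cosmetic remark is that the theorem as printed is elliptical --- its hypothesis (``every CM function $f$'') sits in the sentence preceding the theorem environment --- and you correctly reconstructed the intended assertion before proving it.
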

It is easy to show that $f$ is a LICM if the Radon measure
$\mu$ satisfies the inequality
\begin{equation} \label{eq:ineq}
\int_{[0,\infty[}\; \frac{\mu(\dd r)}{1 + r} < \infty
\end{equation}

\begin{theorem} \label{thm:CMmult}\citep{Jacob01I,HanSerLANL}
If $f, g$ are CM then the pointwise product $f\, g$ is CM.
\end{theorem}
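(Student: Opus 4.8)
The plan is to deduce the statement directly from the general Leibniz rule, which is exactly the informal remark made just before the statement. Both $f$ and $g$ are infinitely differentiable on $\mathbb{R}_+$ by hypothesis, hence so is the pointwise product $h:=f\,g$, and for every integer $n\ge 1$ and every $y>0$
\begin{equation}
\D^{n} h(y)=\sum_{k=0}^{n}\binom{n}{k}\,\D^{k}f(y)\,\D^{\,n-k}g(y).
\end{equation}
Since $(-1)^{n}=(-1)^{k}(-1)^{\,n-k}$, multiplying through by $(-1)^{n}$ gives
\begin{equation}
(-1)^{n}\,\D^{n} h(y)=\sum_{k=0}^{n}\binom{n}{k}\,\bigl[(-1)^{k}\D^{k}f(y)\bigr]\,\bigl[(-1)^{\,n-k}\D^{\,n-k}g(y)\bigr].
\end{equation}
Each bracketed factor is non-negative because $f$ and $g$ are CM, and the binomial coefficients are positive, so every summand, and hence the whole sum, is $\ge 0$. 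Combined with the case $n=0$, which reads $h=f\,g\ge 0$ (a CM function being non-negative), this shows $(-1)^{n}\D^{n}h\ge 0$ for all $n\ge 0$, i.e.\ $h$ is CM.

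An alternative, which I would mention in a remark, runs through Bernstein's theorem (Theorem~\ref{thm:Bernstein}): writing $f(y)=\int_{[0,\infty[}\e^{-ry}\,\mu(\dd r)$ and $g(y)=\int_{[0,\infty[}\e^{-sy}\,\nu(\dd s)$ with $\mu,\nu$ positive Radon measures, one gets $f(y)\,g(y)=\int\!\!\int\e^{-(r+s)y}\,\mu(\dd r)\,\nu(\dd s)$, the Laplace transform of the image of $\mu\times\nu$ under the addition map $(r,s)\mapsto r+s$; differentiating under the integral sign then yields $(-1)^{n}\D^{n}(f\,g)\ge 0$ at once.

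Honestly there is no serious obstacle here. The Leibniz argument is purely formal once one invokes the smoothness of a product of smooth functions and the sign bookkeeping $(-1)^{n}=(-1)^{k}(-1)^{\,n-k}$; the only point in the measure-theoretic variant that deserves a line of justification is that the pushforward of $\mu\times\nu$ along addition is genuinely a Radon measure on $[0,\infty[$ (local finiteness follows from the bound $\lambda([0,N])\le \e^{Ny}f(y)g(y)<\infty$ for any fixed $y>0$) rather than merely a positive measure. I would therefore present the Leibniz proof, which is the shortest and needs no measure theory at all.
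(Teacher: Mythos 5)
Your Leibniz argument is correct and is exactly the route the paper intends: it states no formal proof of this cited theorem but remarks just after the definition of CM that ``it follows from the definition and the Leibniz formula that the product of two CM functions is CM.'' The sign bookkeeping $(-1)^{n}=(-1)^{k}(-1)^{n-k}$ and the non-negativity of each factor settle it; nothing further is needed.
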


Let $f$ be a Bernstein function. Since the derivative $\D f$ of $f$
is LICM, Bernstein's theorem can be applied. Upon integration the following
integral representation of of a general Bernstein function $f$ is obtained:
\begin{equation}
f(y) = a + b\,y + \int_{]0,\infty[} \left[ 1 - \e^{-r \, y}\right] \,
\nu(\dd r) 
\end{equation}
where $a, b = \D f(0) \geq 0$, and $\nu(\dd r) := \mu(\dd r)/r$ is a 
positive Radon measure on $\mathbb{R}_+$ satisfying the inequality
\begin{equation} \label{eq:ineq2}
\int_{]0,\infty[}\; \frac{r \,\nu(\dd r)}{1 + r} < \infty
\end{equation}
The constants $a, b$ and the Radon measure $\nu$ are uniquely determined by the
function $f$.

\begin{theorem}\citep{BergForst,Jacob01I} \label{thm:compBFCM}
If $f$ is a CM function, $g$ is a BF and $g(y) > 0$ for
$y > 0$ then the composition $f\circ g$ is a CM.
\end{theorem}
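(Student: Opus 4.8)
The plan is to use the integral representations already collected in this section together with the closure properties of CM functions. First I would invoke Bernstein's theorem (Theorem~\ref{thm:Bernstein}) to write $f(t) = \int_{[0,\infty[} \e^{-rt}\,\mu(\dd r)$ for a positive Radon measure $\mu$, so that formally
$$ f(g(y)) = \int_{[0,\infty[} \e^{-r\,g(y)}\,\mu(\dd r). $$
The key observation is then that for each fixed $r \geq 0$ the function $y \mapsto \e^{-r\,g(y)}$ is completely monotone. Indeed, $r\,g$ is a Bernstein function (a non-negative multiple of a BF is a BF), and the exponential $x \mapsto \e^{-x}$ restricted to $\mathbb{R}_+$ is CM; so $\e^{-r\,g(y)}$ is the composition of a CM function with a BF — but this is exactly the statement we are proving, so I cannot argue that way circularly. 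Instead I would prove the special case $f = \e^{-x}$ directly: writing $h_r(y) := \e^{-r\,g(y)}$, one shows by induction using the Leibniz/Faà di Bruno formula that $(-1)^n \D^n h_r(y) \geq 0$, exploiting that $\D g$ is CM (hence $(-1)^k\D^{k+1}g \geq 0$ for $k\geq 0$) and that $h_r$ itself is positive; each derivative of $h_r$ is a finite sum of products of $h_r$ with derivatives of $-rg$, all of which carry the correct signs. This is the one genuinely computational lemma.

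Having established that $h_r$ is CM for every $r \geq 0$, I would then integrate: $f\circ g = \int_{[0,\infty[} h_r\,\mu(\dd r)$. A pointwise limit (or, here, an integral average against a positive measure) of CM functions is CM, provided one can differentiate under the integral sign. To justify differentiation under the integral I would check that on any compact subinterval $[y_0,y_1] \subset \mathbb{R}_+$ the relevant derivative bounds are integrable against $\mu$; since $g$ is bounded below by $g(y_0) > 0$ on such an interval and the derivatives of $h_r$ grow polynomially in $r$ times $\e^{-r\,g(y_0)}$, the integrals $\int r^n\,\e^{-r\,g(y_0)}\,\mu(\dd r)$ are finite because $f$ and all its derivatives are finite at the point $g(y_0) > 0$. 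This gives $(-1)^n\D^n(f\circ g)(y) = \int_{[0,\infty[} (-1)^n\D^n h_r(y)\,\mu(\dd r) \geq 0$, which is the claim.

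The main obstacle is the bookkeeping in the first step — verifying that $\e^{-r\,g(y)}$ is CM — since Faà di Bruno's formula produces a sum over partitions and one must check that the signs conspire correctly; the cleanest route is probably the induction $\D h_r = h_r\cdot(-r\,\D g)$ together with the fact (following from Theorem~\ref{thm:CMmult} and the Leibniz rule) that the class of CM functions is closed under multiplication and that $-r\,\D g$ has derivatives of alternating sign starting from a non-positive value, so differentiating the product repeatedly preserves the alternating-sign pattern. The second obstacle, differentiation under the integral sign, is routine once the local bounds above are in place. Note that the hypothesis $g(y) > 0$ for $y>0$ is used precisely to keep $g(y_0)$ bounded away from $0$, ensuring $f$ and its derivatives are evaluated in the interior of $\mathbb{R}_+$ where they are finite.
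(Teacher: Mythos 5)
The paper does not actually prove this theorem; it is stated as a citation to Berg--Forst and Jacob, so there is no internal proof to compare against. Your argument is correct and is the standard one: reduce via Bernstein's theorem to the single exponential, verify by the Leibniz/Fa\`{a} di Bruno induction that $\e^{-r g}$ is CM whenever $g$ is a Bernstein function (the signs do conspire: each term of $\D^n(\e^{-rg})/\e^{-rg}$ is a product $\prod_i (-r\,\D^{j_i}g)^{m_i}$ of total sign $(-1)^{\sum j_i m_i}=(-1)^n$), and then pass the sign conditions through $f\circ g=\int_{[0,\infty[}\e^{-r g(\cdot)}\,\mu(\dd r)$. Your domination bound $\int_{[0,\infty[} r^n\,\e^{-r g(y_0)}\,\mu(\dd r)=(-1)^n f^{(n)}(g(y_0))<\infty$ (using that $g$ is nondecreasing, so $g(y)\geq g(y_0)>0$ on $[y_0,y_1]$) legitimately justifies differentiation under the integral; if you want to avoid that step, note that complete monotonicity is preserved under pointwise limits, e.g.\ via the equivalent finite-difference characterization $(-1)^n\Delta_h^n f\geq 0$, so an integral average of CM functions against a positive measure is automatically CM once it is finite. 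Your remark on where the hypothesis $g(y)>0$ enters --- keeping the argument of $f$ away from a possible singularity at $0$ and making the Bernstein integral converge --- is also the right one.
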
 

\begin{corollary}\citep{BergForst,Jacob01I} \label{corr:BF}
If $g$ is a non-zero BF then $1/g$ is a CM function.
\end{corollary}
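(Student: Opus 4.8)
The plan is to deduce the corollary directly from Theorem~\ref{thm:compBFCM} by composing $g$ with the function $f(y) := 1/y$. First I would record that $f$ is CM on $\mathbb{R}_+$: indeed $(-1)^n\,\D^n f(y) = n!\,y^{-n-1} \ge 0$ for all $y>0$ and all non-negative integers $n$. Consequently, once we know that $g(y) > 0$ for every $y > 0$, Theorem~\ref{thm:compBFCM} applies to the pair $(f,g)$ and yields that $f\circ g = 1/g$ is CM, which is exactly the assertion.

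The only point that requires an argument is the strict positivity $g(y) > 0$ for $y > 0$. Here I would use the integral representation of a general Bernstein function recalled above,
\[
g(y) = a + b\,y + \int_{]0,\infty[} \left[1 - \e^{-r\,y}\right]\,\nu(\dd r),
\]
with $a,b \ge 0$ and $\nu$ a positive Radon measure on $\mathbb{R}_+$. Suppose, for contradiction, that $g(y_0) = 0$ for some $y_0 > 0$. Since every term on the right-hand side is non-negative (note $1 - \e^{-r\,y_0} > 0$ for $r > 0$ and $y_0 > 0$), each must vanish: $a = 0$, $b\,y_0 = 0$ hence $b = 0$, and $\int_{]0,\infty[}\left[1 - \e^{-r\,y_0}\right]\nu(\dd r) = 0$, which forces $\nu = 0$. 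But then $g \equiv 0$, contradicting the hypothesis that $g$ is a non-zero BF. Hence $g(y) > 0$ for all $y > 0$.

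With strict positivity in hand, the proof is complete: apply Theorem~\ref{thm:compBFCM} with $f(y) = 1/y$ to conclude that $1/g$ is completely monotone on $\mathbb{R}_+$. I do not expect any genuine obstacle; the only subtlety worth flagging is that $g(0)$ may be zero (when $a = 0$), in which case $1/g$ has a singularity at the origin — but this is permitted, since a CM function need only be defined and satisfy the sign conditions on $\mathbb{R}_+$, and indeed the paper has already noted that a CM function can have a singularity at $0$.
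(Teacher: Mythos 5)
Your proof is correct and follows exactly the route the paper intends: the statement is presented as a corollary of Theorem~\ref{thm:compBFCM} (with the proof itself delegated to the cited references), and composing with the CM function $y \mapsto 1/y$ is precisely that derivation. Your verification that a non-zero BF is strictly positive on $\mathbb{R}_+$ via the integral representation correctly supplies the one hypothesis of Theorem~\ref{thm:compBFCM} that the corollary's statement does not hand you for free.
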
 

Note that the function $f(y) := \exp(-y)$ is CM but $1/f$ is not a BF.

\begin{definition} \label{def:CBF}
A function $f$ is said to be a \emph{complete Bernstein function} (CBF) if 
there is a Bernstein function $g$ such that $f(y) = y^2\, \tilde{g}(y)$.
\end{definition}
\begin{theorem} \label{thm:analCBF}  \citep{Jacob01I}
A function $f$ is a CBF if and only if it satisfies the following two 
conditions:
\begin{enumerate}
\item $f$ admits an analytic continuation $f(z)$ to the upper complex half-plane;
$f(z)$ is holomorphic and satisfies the inequality 
$\im f(z) \geq 0$ for $\im z > 0$;
\item $f(y) \geq 0$  for $y \in \mathbb{R}_+$.
\end{enumerate}
\end{theorem}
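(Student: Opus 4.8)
The plan is to prove the two implications separately; the substantial content is the ``if'' part, whose engine is the Nevanlinna integral representation of Pick functions.

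\emph{Definition $\Rightarrow$ conditions~1 and~2.} Suppose $f(y)=y^{2}\,\tilde g(y)$ with $g$ a Bernstein function. I would insert the integral representation of $g$ recalled above, $g(y)=a+b\,y+\int_{]0,\infty[}[1-\e^{-r y}]\,\nu(\dd r)$ with $a,b\ge0$ and $\nu$ satisfying \eqref{eq:ineq2}, and compute the Laplace transform term by term for $\re p>0$, interchanging the two integrations by Tonelli's theorem (all integrands being non-negative for real $p$). This gives $\tilde g(p)=a/p+b/p^{2}+\int_{]0,\infty[}r\,[p(p+r)]^{-1}\,\nu(\dd r)$ and hence
\[
f(p)=p^{2}\,\tilde g(p)=a\,p+b+\int_{]0,\infty[}\frac{p\,r}{p+r}\,\nu(\dd r).
\]
The right-hand side is holomorphic on $\mathbb{C}\setminus\,]-\infty,0]$, since there $p+r\ne0$ and, by \eqref{eq:ineq2}, the integral converges locally uniformly; in particular $f$ continues holomorphically to the upper half-plane. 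Because $\im[zr/(z+r)]=r^{2}\,(\im z)\,|z+r|^{-2}\ge0$ for $\im z>0$, we obtain $\im f(z)=a\,(\im z)+\int_{]0,\infty[}r^{2}\,(\im z)\,|z+r|^{-2}\,\nu(\dd r)\ge0$, which is condition~1; and for real $y>0$ every term of the displayed formula is non-negative, which is condition~2.

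\emph{Conditions~1 and~2 $\Rightarrow$ definition.} Assume 1 and 2. Since $f$ is real on $\mathbb{R}_+$, the Schwarz reflection principle extends it to a function holomorphic on $\mathbb{C}\setminus\,]-\infty,0]$ whose restriction to the upper half-plane is a Pick (Nevanlinna) function, so I would invoke the Nevanlinna representation \citep{Jacob01I},
\[
f(z)=\alpha+\beta\,z+\int_{\mathbb{R}}\Bigl(\frac{1}{t-z}-\frac{t}{1+t^{2}}\Bigr)\,\rho(\dd t),\qquad \alpha\in\mathbb{R},\ \beta\ge0,\ \int_{\mathbb{R}}\frac{\rho(\dd t)}{1+t^{2}}<\infty,
\]
and then exploit condition~2. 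Analytic continuation of $f$ across $\mathbb{R}_+$ together with the Stieltjes inversion formula forces $\rho(\,]0,\infty[\,)=0$, so $\supp\rho\subseteq\,]-\infty,0]$; substituting $t=-r$ and absorbing the resulting constants, the representation becomes
\[
f(p)=b_{0}+b_{1}\,p+\int_{]0,\infty[}\frac{p}{p+r}\,\mu(\dd r),\qquad \int_{]0,\infty[}\frac{\mu(\dd r)}{1+r}<\infty,
\]
with $b_{0},b_{1}\ge0$ by condition~2. Finally, the function $g(y):=b_{1}+b_{0}\,y+\int_{]0,\infty[}[1-\e^{-r y}]\,r^{-1}\mu(\dd r)$ is a Bernstein function — its Radon measure $r^{-1}\mu(\dd r)$ satisfies \eqref{eq:ineq2} because of the last display — and running the Laplace computation of the first part in reverse gives $y^{2}\,\tilde g(y)=f(y)$. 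Hence $f$ is a complete Bernstein function.

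The step I expect to be hardest is this reduction of the abstract Nevanlinna representation to the canonical complete-Bernstein form. One must show that condition~2 does more than relocate the support of $\rho$ to $\,]-\infty,0]$: it also excludes a pole of $f$ at the origin — an atom of $\rho$ at $0$, or merely too much mass of $\rho$ near $0$, would force $f(y)\to-\infty$ as $y\downarrow0$ — so that the reflected, rescaled measure $r^{-1}\mu(\dd r)$ is genuinely a Radon measure obeying \eqref{eq:ineq2} and the constant $b_{0}=\lim_{y\downarrow0}f(y)$ comes out non-negative rather than merely real. Granting the Nevanlinna representation, the first implication and the bookkeeping with the measures $\nu$, $\mu$, $\rho$ are then routine.
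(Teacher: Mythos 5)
The paper itself offers no proof of this theorem: it is quoted from \citet{Jacob01I} as a known result, so there is nothing internal to compare against. Judged on its own terms, your argument is correct and is essentially the standard proof of the analytic characterization of complete Bernstein functions. The forward direction (Laplace-transforming the L\'{e}vy--Khinchine form of $g$ term by term to get $f(p)=ap+b+\int\frac{pr}{p+r}\,\nu(\dd r)$, then reading off holomorphy, $\im f\geq 0$, and positivity on $\mathbb{R}_+$) is complete as written. The converse via Schwarz reflection, the Nevanlinna--Pick representation and Stieltjes inversion is the right engine, and you have correctly isolated the only delicate point: after $\supp\rho$ is pushed into $]-\infty,0]$, the change of variable $t=-r$ produces the kernel $\frac{z}{r(z+r)}-\frac{1}{r(1+r^2)}$, so one must know that $\int_{]0,1]}r^{-1}\rho(\dd(-r))<\infty$ and $\rho(\{0\})=0$ before the ``absorbed constant'' even makes sense. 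The mechanism you indicate does close this: for real $y\downarrow 0$ each integrand $\frac{1}{t-y}-\frac{t}{1+t^2}$ decreases monotonically to $\frac{1}{t(1+t^2)}=-\frac{1}{r(1+r^2)}$, so condition~2 ($f(y)\geq 0$) together with monotone convergence forces both the finiteness of that integral and the absence of an atom at the origin, and simultaneously yields $b_0=\lim_{y\downarrow 0}f(y)\geq 0$. With that step written out, the proof is complete; the remaining verifications (that $r^{-1}\mu(\dd r)$ satisfies \eqref{eq:ineq2} and that $y^2\tilde g(y)=f(y)$) are the routine bookkeeping you describe.
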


The derivative $\D g$ of the Bernstein function $g$ is a LICM function $h$. 
Hence we have the following theorem:
\begin{theorem}
Every CBF $f$ can be expressed in the form 
\begin{equation} \label{eq:CBFLICM}
f(y) = y \, \tilde{h}(y) + a\, y
\end{equation}
where $h$ is LICM and $a = g(0) \geq 0$. Conversely, for every LICM function $h$
and $a \geq 0$ the function $f$ given by \eqref{eq:CBFLICM} is a CBF. 
\end{theorem}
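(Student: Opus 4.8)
The plan is to prove both directions of the equivalence, using the definition of CBF in Definition~\ref{def:CBF} together with the integral representation of Bernstein functions derived just above the statement.

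\textbf{Forward direction.} Suppose $f$ is a CBF. By Definition~\ref{def:CBF} there is a Bernstein function $g$ with $f(y) = y^2\,\tilde{g}(y)$. Since $g$ is a Bernstein function, its derivative $\D g =: h$ is LICM, and $g$ has a finite limit $a := g(0) \geq 0$ at the origin. I would then compute $\tilde{g}(y)$ by relating it to $\tilde{h}(y)$ via integration by parts in the Laplace integral: from $g(t) = a + \int_0^t h(s)\,\dd s$ one gets $\tilde{g}(y) = a/y + \tilde{h}(y)/y$, so that $y^2\,\tilde{g}(y) = a\,y + y\,\tilde{h}(y)$, which is exactly \eqref{eq:CBFLICM}. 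The one point needing care is the convergence of $\tilde{g}(y)$ and the legitimacy of the integration by parts; this is where the local integrability of $h$ near $0$ (the ``LI'' in LICM) and the sublinear growth of $g$ (from \eqref{eq:ineq2}) enter, guaranteeing $\tilde{g}(y) < \infty$ for $y > 0$ and that the boundary terms behave.

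\textbf{Converse direction.} Conversely, given an LICM function $h$ and $a \geq 0$, define $g(t) := a + \int_0^t h(s)\,\dd s$. Because $h$ is locally integrable on $]0,1]$ this integral is finite for every $t$, $g$ is differentiable with $\D g = h$ CM, and $g \geq a \geq 0$, so $g$ is a Bernstein function with $g(0) = a$. Running the same Laplace-transform identity in reverse gives $y^2\,\tilde g(y) = a\,y + y\,\tilde h(y) = f(y)$, so $f$ is of the form in Definition~\ref{def:CBF} and hence a CBF.

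\textbf{Main obstacle.} The only genuine subtlety—and the step I would spell out most carefully—is the interchange/integration-by-parts step converting $\tilde g$ into a combination of $\tilde h$ and a constant, because $h$ may be singular at $0$. I would handle it by writing $g(t)-a = \int_{]0,t]} h(s)\,\dd s$, applying Tonelli's theorem to $\int_0^\infty \e^{-yt}\!\int_{]0,t]} h(s)\,\dd s\,\dd t$ (everything is non-negative, so no integrability hypothesis beyond finiteness is needed to justify the swap), and evaluating the inner integral $\int_s^\infty \e^{-yt}\,\dd t = \e^{-ys}/y$. Finiteness of the resulting expression, i.e. $\tilde h(y)<\infty$, follows from local integrability near $0$ together with $\e^{-ys}$ decay at infinity, and then $\tilde g(y) = a/y + \tilde h(y)/y$ is immediate. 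Multiplying through by $y^2$ yields \eqref{eq:CBFLICM}, and uniqueness of $a$ and $h$ follows from the uniqueness of the Bernstein representation noted earlier in the excerpt.
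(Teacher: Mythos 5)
Your proposal is correct and follows essentially the same route as the paper: both directions reduce to the decomposition $g(t) = a + \int_0^t h(s)\,\dd s$ with $h = \D g$ LICM (finiteness of $\int_0^1 h$ coming from $g(1)-g(0) < \infty$) and the Laplace-transform identity $y^2\,\tilde{g}(y) = a\,y + y\,\tilde{h}(y)$. You merely make explicit, via Tonelli, the transform step that the paper leaves implicit.
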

\begin{proof} 
For the first part, let $g$ be the BF in Definition~\ref{def:CBF} and let $h := \D g$. 
Since $\int_0^1 h(x) \, \dd x = g(1) - g(0) < \infty$, the function $h$ is LICM.
For the second part, note that if $h$ is LICM, then 
$g(y) = a + \int_0^y h(s) \, \dd s$ is a BF and $f(y) = y^2 \, \tilde{g}(y)$.
\end{proof}

Since the Laplace transform of a LICM function $h$ has the form 
\begin{equation}
\tilde{h}(y) = \int_{[0,\infty[}\; \frac{\mu(\dd r)}{r + y}
\end{equation}
where $\mu$ is the Radon measure associated with $h$,
every CBF function $f$ has the following integral representation 
\begin{equation} \label{eq:CBFintegral}
f(y) = b + a \, y + y \int_{]0,\infty[}\; \frac{\mu(\dd r)}{r + y}
\end{equation}
with arbitrary $a, b  = \mu(\{0\}) \geq 0$ and an arbitrary positive 
Radon measure $\mu$ satisfying eq.~\eqref{eq:ineq}. The constants
$a, b$ and the Radon measure $\mu$ are uniquely determined by
the function $f$.

Noting that 
$y/(y + r) = r\, [ 1/r - 1/(y + r) ]$,
we can also express the CBF $f$ in the following form
$$f(y) = b + a\, y + \int_0^\infty \left[1 - \e^{-z \, y}\right]
h(z) \, \dd z$$
where $h(z) :=  \int_{]0,\infty[}\; \e^{-r z}\, m(\dd r) \geq 0$
and $m(\dd r) := r\,\mu(\dd r)$ satisfies the inequality
$$ \int_{[0,\infty[}\; \frac{m(\dd r)}{r (r + 1)} < \infty$$
Let $\nu(\dd z) := h(z) \, \dd z$. 
We have  $$ \int_{[0,\infty[} \frac{z\,\nu(\dd z)}{1 + z} = \int_0^\infty \frac{z \, h(z)\, \dd z}{1 + z} = 
\int_{[0,\infty[} m(\dd r) \,[1/r - e^r \, \Gamma(0,r)]$$
Using the asymptotic properties of the incomplete
Gamma function \citep{Abramowitz} it is possible to prove
that the right-hand side is finite, hence the Radon measure 
$\nu(\dd z) := h(z) \, \dd z$ satisfies inequality~\eqref{eq:ineq2}.
We have thus proved an important theorem:
\begin{theorem} \label{thm:BFCBF}
Every CBF is a BF.
\end{theorem}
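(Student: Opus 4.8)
The goal is to show that a complete Bernstein function $f$, given by the integral representation \eqref{eq:CBFintegral}, is a Bernstein function, i.e. that $f$ is non-negative, differentiable, and $\D f$ is completely monotone. Non-negativity is immediate from \eqref{eq:CBFintegral} since $a,b\geq 0$, $\mu\geq 0$, and $y/(y+r)\geq 0$ for $y>0$. Differentiability under the integral sign is routine provided the differentiated integrand is dominated locally uniformly in $y$; the constant $b$ drops out and $\D f(y) = a + \int_{]0,\infty[} r\,\mu(\dd r)/(r+y)^2$. So the real content is: $\D f$ is CM.

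The plan is to establish this via the reformulation already carried out in the excerpt immediately preceding the theorem. Writing $y/(y+r) = r[1/r - 1/(y+r)]$ and $m(\dd r) := r\,\mu(\dd r)$, one obtains
$$f(y) = b + a\,y + \int_0^\infty \bigl[1 - \e^{-zy}\bigr]\,h(z)\,\dd z,$$
where $h(z) = \int_{]0,\infty[}\e^{-rz}\,m(\dd r)$ is manifestly CM (it is a Laplace transform of a positive Radon measure, hence CM by Bernstein's Theorem \ref{thm:Bernstein}). Differentiating this representation termwise gives $\D f(y) = a + \int_0^\infty z\,\e^{-zy}\,h(z)\,\dd z$. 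Now $z\,h(z)$ is CM (product of the CM function $h$ with... well, not quite — $z$ is not CM). Instead I would argue directly: $\D f(y) = a + \int_0^\infty \e^{-zy}\,\bigl(z\,h(z)\bigr)\,\dd z = a + \widetilde{(z\,h)}(y)$ is the Laplace transform of the nonnegative function $z\mapsto z\,h(z)$ plus the nonnegative constant $a$, hence CM by Bernstein's theorem, \emph{provided} the termwise differentiation is justified and the integral converges for all $y>0$.

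That proviso is the main obstacle, and it is exactly what the displayed computation before the theorem statement addresses: one must verify that $\nu(\dd z) := h(z)\,\dd z$ satisfies the integrability condition \eqref{eq:ineq2}, equivalently that $\int_0^\infty z\,h(z)/(1+z)\,\dd z < \infty$. Using Fubini this equals $\int_{[0,\infty[} m(\dd r)\,[1/r - \e^r\,\Gamma(0,r)]$, and the finiteness follows from the asymptotics of the incomplete Gamma function $\Gamma(0,r)$ (at $r\to 0$ the bracket is $O(\log r)$-type but integrable against $m(\dd r)/r$ near $0$ after accounting for $m(\{0\})$ being handled by the $a,b$ terms, and at $r\to\infty$ the bracket behaves like $1/r^2$, matching $\int m(\dd r)/(r(r+1))<\infty$). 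Granting \eqref{eq:ineq2}, the representation $f(y) = b + a\,y + \int_{]0,\infty[}[1-\e^{-zy}]\,\nu(\dd z)$ is precisely the canonical integral representation of a Bernstein function recalled earlier in Section \ref{sec:tools} (with Lévy measure $\nu$), so $f$ is a BF. I would therefore structure the proof as: (i) reduce to showing $\nu$ satisfies \eqref{eq:ineq2}; (ii) invoke the Fubini identity and the incomplete-Gamma asymptotics from \citet{Abramowitz} to conclude finiteness; (iii) read off that $f$ has the standard BF representation. The one genuinely delicate point — separating the $r\to 0$ contribution, where $m(\{0\})>0$ would be fatal — is handled by noting $m(\dd r) = r\,\mu(\dd r)$ has no atom at $0$, so the near-origin integrand is tamed.
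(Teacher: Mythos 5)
Your proposal is correct and follows essentially the same route as the paper: both rewrite the representation \eqref{eq:CBFintegral} in the canonical Bernstein-function form with L\'{e}vy density $h(z)=\int_{]0,\infty[}\e^{-rz}\,r\,\mu(\dd r)$ and then verify condition \eqref{eq:ineq2} via the Fubini identity and the asymptotics of the incomplete Gamma function. The extra remarks you add (that $\D f(y)=a+\int_0^\infty \e^{-zy}\,z\,h(z)\,\dd z$ is itself a Laplace transform of a non-negative function, and that $m(\dd r)=r\,\mu(\dd r)$ carries no atom at the origin) are equivalent restatements of the same argument rather than a different method.
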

However $1 - \exp(-y)$ is a BF but not a CBF.

The simplest example of a CBF is $$ \varphi_a(y) :=
y/(y+a) \equiv y^2 \int_0^\infty \e^{-s y} \left[ 1 - \e^{-s a}\right]
\, \dd s $$
$a \geq 0$. It follows from eq.~\eqref{eq:CBFintegral} that 
every CBF $f$ which satisfies the conditions 
$f(0) = 0$ and $\lim_{y\rightarrow \infty} f(y)/y = 0$ 
is an integral superposition of the functions $\varphi_a$. The CBF
$\varphi_a$ corresponds to a Debye element defined by the relaxation
function $G_a(t) = \exp(-a \,t)$.

We shall need the following properties of CBFs:
\begin{theorem} \label{thm:CBF}\citep{Jacob01I,HanSerLANL}\\
\begin{enumerate}
\item $f$ is a CBF if and only if $y/f(y)$ is a CBF;
\item if $f, g$ are CBFs, then $f \circ g$ is a CBF.
\end{enumerate}
\end{theorem}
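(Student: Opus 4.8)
The plan is to read off both assertions from the analytic description of complete Bernstein functions in Theorem~\ref{thm:analCBF}: a function is a CBF exactly when it is non-negative on $\mathbb{R}_+$ and has a holomorphic continuation to the open upper half-plane $\mathbb{H}:=\{z:\im z>0\}$ whose imaginary part is $\ge 0$ there (a Pick function). In part~1 one tacitly assumes $f\not\equiv 0$, so that $y/f(y)$ is meaningful.

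For part~1 I would start from the integral representation~\eqref{eq:CBFintegral}, $f(y)=b+a\,y+y\int_{]0,\infty[}\mu(\dd r)/(r+y)$, whose terms all continue holomorphically to $\mathbb{C}\setminus\,]-\infty,0]$; I keep the letter $f$ for the continuation. For $\im z>0$ one computes $\im\bigl(f(z)/z\bigr)=-\im z\,\bigl(b/|z|^2+\int_{]0,\infty[}\mu(\dd r)/|z+r|^2\bigr)\le 0$ and $\im f(z)=\im z\,\bigl(a+\int_{]0,\infty[}r\,\mu(\dd r)/|z+r|^2\bigr)\ge 0$; the latter shows $f(z)\neq 0$ on $\mathbb{H}$ (in the boundary case the bracket vanishes and $f$ is a positive constant, which again does not vanish). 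Hence $z\mapsto f(z)/z$ sends $\mathbb{H}$ into the closed lower half-plane and misses $0$, so $z/f(z)=1/\bigl(f(z)/z\bigr)$ is holomorphic on $\mathbb{H}$ with $\im\bigl(z/f(z)\bigr)=-\im\bigl(f(z)/z\bigr)/|f(z)/z|^2\ge 0$, and it is positive on $\mathbb{R}_+$. By Theorem~\ref{thm:analCBF}, $y/f(y)$ is a CBF. The converse is this same implication applied to the CBF $g(y):=y/f(y)$, since $y/g(y)=f(y)$.

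For part~2, let $f,g$ be CBFs and denote by $f,g$ also their holomorphic Pick continuations to $\mathbb{H}$. If $g$ is a constant then $f\circ g$ is a non-negative constant, hence a CBF (take $a=0,\mu=0$ in~\eqref{eq:CBFintegral}); so assume $g$ non-constant. Then $\im g$ is harmonic and $\ge 0$ on $\mathbb{H}$, so by the minimum principle for harmonic functions it cannot attain the value $0$ in the interior, i.e.\ $g(\mathbb{H})\subseteq\mathbb{H}$. Consequently $z\mapsto f(g(z))$ is well defined and holomorphic on $\mathbb{H}$, with $\im f(g(z))\ge 0$ because $g(z)\in\mathbb{H}$ and $f$ is Pick; its restriction to $\mathbb{R}_+$ is $y\mapsto f(g(y))\ge 0$ because a non-constant CBF is strictly positive on $\mathbb{R}_+$ (once more by~\eqref{eq:CBFintegral}). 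Theorem~\ref{thm:analCBF} then gives that $f\circ g$ is a CBF.

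The sign computations and the boundary continuity are routine. The one genuine subtlety is in part~1: one cannot obtain $z/f(z)\in\mathrm{CBF}$ by merely combining the Pick property of $z$ with that of $1/f$, because a product of Pick functions need not be Pick (e.g.\ $z\mapsto z^2$); this is exactly why the explicit representation~\eqref{eq:CBFintegral} has to be invoked. In part~2 the only thing requiring care is excluding the degenerate possibility that $g$ carries an interior point to the real axis, which the minimum principle dispatches at once.
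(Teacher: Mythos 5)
Your proof is correct and follows exactly the route the paper itself indicates: the paper cites the literature for part~1 and merely remarks that part~2 ``follows easily from Theorem~\ref{thm:analCBF}'', which is precisely the Pick-function argument you carry out in detail. The additional care you take --- nonvanishing of $f$ on the upper half-plane via the representation~\eqref{eq:CBFintegral}, and the minimum principle to guarantee $g(\mathbb{H})\subseteq\mathbb{H}$ --- correctly supplies the details the paper leaves to the references.
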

The second statement follows easily from Theorem~\ref{thm:analCBF}. 
\begin{remark} \label{rem:1}
$y^\alpha$ is a CBF if $0 < \alpha < 1$, 
because
\begin{multline*}
y^{\alpha-1} = \frac{1}{\Gamma(1-\alpha)} 
\int_0^\infty \e^{-y s}\, s^{-\alpha} \, 
\dd s =\\ \frac{1}{\Gamma(1-\alpha) \, \Gamma(\alpha)} 
\int_0^\infty \dd z \int_0^\infty  \,\e^{-y s} \, \e^{-z s} \dd s
\;z^{\alpha-1}= \\
\frac{\sin(\alpha \, \upi)}{\upi}
\int_0^\infty \frac{z^{\alpha-1}}{y + z}\, \dd z 
\end{multline*}
and thus 
$$y^\alpha = \frac{y\, \sin(\alpha \, \upi)}{\upi}
\int_0^\infty \frac{z^{\alpha-1}}{y + z}\, \dd z$$
\end{remark}

The sets of LICM functions and CBFs 
will be denoted by $\mathfrak{F}$ and $\mathfrak{C}$ 
respectively.

\section{Positivity of one- and three-dimensional solutions.}

Applying the results of the previous section, we get the following result:
\begin{theorem}
If $a \geq 0$ and the relaxation modulus $G$ is CM then the function $Q$
defined by eq.~\eqref{eq:defQ} is a CBF.\\
The mapping $(a, G) \in \overline{\mathbb{R}_+}\times \mathfrak{F} 
\rightarrow Q \in \mathfrak{C}$ defined by eq.~\eqref{eq:defQ} is bijective.
\end{theorem}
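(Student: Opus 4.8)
The plan is to build up $Q$ from the basic CBF machinery of Section~\ref{sec:tools}. First I would rewrite $Q(p) = a\,p + p\,\tilde{G}(p)$ and argue the two summands separately. The term $a\,p$ is trivially a CBF (it is $y^2\tilde{g}(y)$ with $g(y)\equiv a$, a Bernstein function). For the memory term, the key observation is that a CM relaxation modulus $G$ which is admissible as a viscoelastic kernel must be locally integrable — i.e., $G$ is in fact LICM — so that $p\,\tilde{G}(p)$ is exactly of the form \eqref{eq:CBFLICM} with $h = G$ and $a = 0$, hence a CBF by the theorem following Definition~\ref{def:CBF}. Since the sum of two CBFs is a CBF (immediate from the integral representation \eqref{eq:CBFintegral}, or from the half-plane characterization in Theorem~\ref{thm:analCBF}), $Q$ is a CBF. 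I expect the main subtlety here to be the hypothesis: ``$G$ is CM'' has to be read as ``$G$ is LICM'', since a CM function may fail to be integrable near $0$ (e.g.\ have a non-integrable singularity), in which case $p\tilde G(p)$ need not even be defined; the paper's running assumption that $G$ is a legitimate relaxation modulus supplies this, and I would state it explicitly.

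For the bijectivity claim I would exhibit the inverse map and check both directions. Given a CBF $Q$, Theorem~\ref{thm:BFCBF} together with the representation \eqref{eq:CBFLICM} gives a unique decomposition $Q(y) = a\,y + y\,\tilde h(y)$ with $a = g(0)\ge 0$ and $h$ LICM; this is precisely the uniqueness of the triple $(a,b,\mu)$ in \eqref{eq:CBFintegral} specialized to $b = Q(0) = 0$ (note $Q(0)=0$ automatically since $Q(p)=ap+p\tilde G(p)\to 0$ as $p\to 0^+$ when $\tilde G$ is bounded near $0$; more carefully one reads off $b=\mu(\{0\})$ and the Newtonian constant as the two separate nonnegative parameters). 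Then set $G := h$, recovering $(a,G)\in\overline{\mathbb{R}_+}\times\mathfrak{F}$. Surjectivity onto $\mathfrak{C}$ is the content of the ``conversely'' half of the theorem after Definition~\ref{def:CBF}; injectivity is the uniqueness of $a$, $b=0$, and $\mu$ (equivalently of $h$) in that representation.

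The one genuine gap to be careful about is whether the image is all of $\mathfrak{C}$ or only the CBFs vanishing at the origin: since $Q(0) = 0$ for every admissible $(a,G)$, the map lands in the subclass $\{Q\in\mathfrak{C} : Q(0)=0\}$, and one should either restrict the codomain accordingly or absorb a possible constant $b=Q(0)$ into $G$ via a Dirac mass at $r=0$ in the measure $\mu$ (which corresponds to $G$ having a nonzero limit, i.e.\ an ``instantaneous elasticity'' term). I would phrase the statement so that $\mathfrak{C}$ here means CBFs with $Q(0)=0$, matching the physical setup, and then the verification is entirely a matter of quoting \eqref{eq:CBFintegral}, \eqref{eq:CBFLICM}, and the uniqueness of the associated Radon measure. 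No step requires a hard estimate; the work is bookkeeping plus the one-line regularity remark that $G$ LICM is what makes $p\tilde G(p)$ a CBF.
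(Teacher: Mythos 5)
Your argument is correct and is exactly the paper's (implicit) proof: the theorem is stated as an immediate application of the representation \eqref{eq:CBFLICM} of a CBF as $y\,\tilde h(y)+ay$ with $h$ LICM and $a\ge 0$, together with the uniqueness of that representation, and your reading of ``CM'' as ``LICM'' matches the paper's own choice of domain $\overline{\mathbb{R}_+}\times\mathfrak{F}$. The one caveat you flag is unnecessary: no restriction of the codomain is needed, because the constant $b=\mu(\{0\})$ in \eqref{eq:CBFintegral} is already realized by an LICM modulus $G$ with $G(\infty)=b>0$ (an equilibrium modulus, not an ``instantaneous elasticity''), so that $Q(0^+)=G(\infty)$ ranges over all of $\overline{\mathbb{R}_+}$ and the map is genuinely onto all of $\mathfrak{C}$.
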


A one-dimensional solution of eq.\eqref{eq:Laplacetransformed} is given by
$$\tilde{u}_1(p,x) = U_1(p,\vert x\vert):=  
A(p) \, \exp(-B(p) \, \vert x \vert)$$ with
$B(p) = \rho^{1/2}\,p/Q(p)^{1/2}$ and $A(p) = g(p)/[2 B(p)]$
If $Q \in \mathfrak{C}$, then $Q(y)^{1/2}$ is a composition of two CBFs, namely 
$y^{1/2}$ (Remark~\ref{rem:1}) and $Q$, hence it is a CBF
by Theorem~\ref{thm:CBF}. The function $B(p)$ is a CBF by Theorem~\ref{thm:CBF}
and $1/B(p)$ is a CM function by Theorem~\ref{thm:BFCBF} and 
Corollary~\ref{corr:BF}. 

The amplitude of the solution of Problem~I is given by $A(p) = 1/[2\, p\, B(p)] + \dot{u}_0/[2 \,B(p)]$. The first term 
is a CM function because it is the product of two CM functions. The second 
term is also CM, hence $A(p)$ is CM. The amplitude of the solution of Problem~II 
$A(p) = 1/[2\, B(p)]$ is also CM.

For every fixed $x$ the function $\exp(-B(p)\, \vert x \vert)$ is the composition
of a CBF and the function $B$, which is a CBF and therefore a BF. 
By Theorem~\ref{thm:compBFCM} the function $\exp(-B(\cdot)\, \vert x \vert)$ is
CM. This proves that for $d = 1$ the solutions of Problem~II and Problem~I with
$u_0 = 0$ are non-negative.

In a three-dimensional space the solution $\tilde{u}_3$ of 
\eqref{eq:Laplacetransformed} is given by the equation 
$$\tilde{u}_3(p,x) = -\frac{1}{2 \upi r} 
\frac{\partial U_1(p,r)}{\partial r}$$
where $r = \vert x \vert$, so that 
\begin{equation}
\tilde{u}_3(p,x) = \frac{1}{4 \upi r} A(p) \, B(p) \, 
\exp(-B(p) \, \vert x \vert)
\end{equation}
But $A(p) \, B(p) = g(p)/2$. If $u_0 = 0$ then $g$ is CM. Hence
$\tilde{u}_3(\cdot,x)$ is the product of two CM functions and thus CM.

\section{Positivity of solutions in arbitrary dimension.}
\label{sec:multidim}

In an arbitrary dimension $d$ 
\begin{equation}
\tilde{u}_d(p,x) = \frac{g(p)}{(2 \upi)^d\, Q(p)} 
\int \e^{\ii \langle \vec{k},x\rangle}
\frac{1}{\rho\, p^2/Q(p) + \vert \vec{k} \vert^2} \dd_d k
\end{equation}
The above formula can be expressed in terms of MacDonald functions
by using eq.~(3) in Sec.~3.2.8 of \citet{Gelfand}:
\begin{equation}
\tilde{u}_d(p,x) = 
\frac{\rho^{d/4-1/2} \, g(p) \,p^{d/2-1}}{(2 \upi)^{d/2}\, Q(p)^{d/4+1/2}}
r^{-(d/2-1)}\,
K_{d/2-1}\left(B(p)\, r\right)
\end{equation}
where $B(p)$ is defined in the preceding section. 

The MacDonald function is given by the integral representation
\begin{equation}
K_\mu(z) = \int_0^\infty \exp(-z\, \cosh(s)) \, \cosh(\mu s) \, \dd s
\end{equation}
Since $\cosh(y)$ is a positive increasing function, it follows immediately 
that $K_\mu(z)$ is a CM function.

We shall need a stronger theorem on complete monotonicity of MacDonald functions. 
\begin{theorem} \citep{MillerSamko01}.\\
The function $z^{1/2}\, K_\mu(z)$ is CM for $\mu \geq 1/2$.
\footnote{The theorem is valid for $\mu \geq 0$, see \citet{MillerSamko01},
but we do not need this fact.}
\end{theorem}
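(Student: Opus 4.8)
The plan is to reduce the complete monotonicity of $z^{1/2}K_\mu(z)$ to the complete monotonicity of the elementary functions $z\mapsto\bigl(1+u/(2z)\bigr)^{\mu-1/2}$, $u>0$, by means of an integral representation of the MacDonald function, and then to extract the latter from the complete‑Bernstein calculus of Section~\ref{sec:tools}. The representation I would use is the exponentially scaled one underlying the asymptotic expansion of $K_\mu$ (see \citet{Abramowitz}): for $\mu>-1/2$ and $\re z>0$,
\[
K_\mu(z)=\left(\frac{\upi}{2z}\right)^{1/2}\frac{\e^{-z}}{\Gamma(\mu+1/2)}\int_0^\infty \e^{-u}\,u^{\mu-1/2}\left(1+\frac{u}{2z}\right)^{\mu-1/2}\dd u ,
\]
so that
\[
z^{1/2}\,K_\mu(z)=\sqrt{\frac{\upi}{2}}\;\frac{\e^{-z}}{\Gamma(\mu+1/2)}\int_0^\infty \e^{-u}\,u^{\mu-1/2}\left(1+\frac{u}{2z}\right)^{\mu-1/2}\dd u .
\]
The key point of this particular choice is that the whole $z$‑dependence, apart from the harmless factor $\e^{-z}$, is confined to $\bigl(1+u/(2z)\bigr)^{\mu-1/2}$.

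Next I would argue that it suffices to show that for every fixed $u>0$ the function $z\mapsto\bigl(1+u/(2z)\bigr)^{\mu-1/2}$ is CM on $\mathbb{R}_+$. Indeed, if each of these is CM, then by Bernstein's theorem (Theorem~\ref{thm:Bernstein}) it is the Laplace transform of a non‑negative measure $\rho_u$; by Tonelli's theorem $\int_0^\infty \e^{-u}u^{\mu-1/2}\bigl(1+u/(2z)\bigr)^{\mu-1/2}\dd u$ is then the Laplace transform of the non‑negative measure $\int_0^\infty \e^{-u}u^{\mu-1/2}\rho_u(\cdot)\,\dd u$ (the double integral being finite for $\re z>0$), hence CM; and multiplication by the CM function $\e^{-z}$ keeps it CM by Theorem~\ref{thm:CMmult}.

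It remains to prove that $\bigl(1+u/(2z)\bigr)^{\beta}$ is CM in $z$ for $\beta:=\mu-1/2\ge 0$. I would write $\beta=n+\beta_0$ with $n\in\mathbb{Z}_{\ge 0}$ and $\beta_0\in[0,1[$ and factor $\bigl(1+u/(2z)\bigr)^{\beta}=\bigl(1+u/(2z)\bigr)^{n}\cdot\bigl(1+u/(2z)\bigr)^{\beta_0}$. The first factor is a polynomial in $1/z$ with non‑negative coefficients, hence CM since $z^{-k}$ is CM for every integer $k\ge 0$. For the second factor the case $\beta_0=0$ is trivial; if $\beta_0\in{}]0,1[$, observe that $1+u/(2z)=1/\varphi_{u/2}(z)$ with $\varphi_{u/2}(z)=z/(z+u/2)$ a CBF. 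Since $y\mapsto y^{\beta_0}$ is a CBF by Remark~\ref{rem:1}, the composition $\varphi_{u/2}(\cdot)^{\beta_0}$ is a CBF by Theorem~\ref{thm:CBF}, hence a non‑zero Bernstein function by Theorem~\ref{thm:BFCBF}; therefore its reciprocal $\bigl(1+u/(2z)\bigr)^{\beta_0}=\varphi_{u/2}(z)^{-\beta_0}$ is CM by Corollary~\ref{corr:BF}. A last application of Theorem~\ref{thm:CMmult} gives that $\bigl(1+u/(2z)\bigr)^{\beta}$ is CM, and the theorem follows.

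The only genuinely delicate step is the first one: picking the right integral representation. A naive use of $K_\mu(z)=c\,z^{\mu}\int_1^\infty \e^{-zt}(t^2-1)^{\mu-1/2}\dd t$ leaves an explicit factor $z^{\mu+1/2}$, which is not CM and would have to be disentangled from the integral; the exponentially scaled form avoids this entirely. Justifying Fubini/Tonelli is routine. Note finally that the argument uses $\mu\ge 1/2$ only through $\beta\ge 0$: for $\mu\in{}]0,1/2[$ the exponent $\beta$ is negative and $\varphi_{u/2}^{-\beta}$ is merely a Bernstein function, not CM, so the stronger statement in the footnote requires an additional device — which is why we restrict to $\mu\ge 1/2$.
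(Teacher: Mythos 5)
Your proposal is correct, and at the top level it follows the same route as the paper: the same exponentially scaled representation (Gradshteyn--Ryzhik 8.432.8) of $z^{1/2}K_\mu(z)$ as $\e^{-z}$ times an integral superposition of the functions $z\mapsto\bigl(1+u/(2z)\bigr)^{\mu-1/2}$, followed by closure of the CM class under superposition and products. (Incidentally, your normalizing constant $\Gamma(\mu+1/2)$ is the correct one; the paper's $\Gamma(1/2-\mu)$ is a typo.) Where you genuinely diverge is in the proof of the key lemma that $(1+1/x)^{\alpha}$ is CM for $\alpha\ge 0$. Both arguments split $\alpha=n+\beta_0$ and handle the integer part trivially, but for the fractional part the paper verifies complete monotonicity by an explicit change-of-variables identity exhibiting $(1+1/x)^{\beta_0}$ as an integral of the manifestly CM functions $(xy+1)^{\beta_0-1}$, whereas you derive it from the complete Bernstein calculus of Section~\ref{sec:tools}: $\bigl(1+u/(2z)\bigr)^{\beta_0}$ is the reciprocal of the CBF $\varphi_{u/2}^{\beta_0}=\bigl(y^{\beta_0}\bigr)\circ\varphi_{u/2}$ (Remark~\ref{rem:1} and Theorem~\ref{thm:CBF}), hence CM by Theorem~\ref{thm:BFCBF} and Corollary~\ref{corr:BF}. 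Your version is less self-contained but arguably more in the spirit of the paper, since it reuses the machinery already assembled for the function $Q$; the paper's version is elementary and independent of the CBF apparatus. Your remark on why the argument stalls at $\mu=1/2$ (the exponent $\beta=\mu-1/2$ must be non-negative) correctly identifies the obstruction to the stronger statement in the footnote.
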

The proof of this theorem requires a lemma.
\begin{lemma} \label{lem}
If $\alpha \geq 0$ then the function $\left( 1 + 1/x \right)^\alpha$ is CM.
\end{lemma}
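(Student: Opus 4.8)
The plan is to notice that $1+1/x$ is the reciprocal of the complete Bernstein function $\varphi_1$ and then to exhibit $(1+1/x)^{\alpha}$ as a composition of the form (completely monotone)~$\circ$~(Bernstein), so that Theorem~\ref{thm:compBFCM} applies directly.

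First I would rewrite, for $x>0$,
$$\left(1+\frac{1}{x}\right)^{\alpha}=\left(\frac{x+1}{x}\right)^{\alpha}=\left(\frac{x}{x+1}\right)^{-\alpha}=\varphi_1(x)^{-\alpha},$$
where $\varphi_1(y)=y/(y+1)$ is the Debye-element CBF recalled above. By Theorem~\ref{thm:BFCBF} the function $\varphi_1$ is in particular a Bernstein function, and clearly $\varphi_1(y)>0$ for every $y>0$. (If one prefers not to invoke that $\varphi_1$ is a CBF, the same fact follows directly: $1/(x+1)$ is CM by Corollary~\ref{corr:BF}, so $\varphi_1=1-1/(x+1)$ is nonnegative with the CM derivative $1/(x+1)^2$, hence is a BF.)

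Second I would record that $h(y):=y^{-\alpha}$ is CM for every $\alpha\ge 0$: for $\alpha=0$ it is the constant $1$, and for $\alpha>0$ one has the standard representation $y^{-\alpha}=\Gamma(\alpha)^{-1}\int_{0}^{\infty}s^{\alpha-1}\,\e^{-sy}\,\dd s$, which is the Laplace transform of a positive measure and hence CM by Bernstein's theorem (Theorem~\ref{thm:Bernstein}); equivalently, one differentiates under the integral sign.

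Finally, since $h$ is CM, $\varphi_1$ is a BF, and $\varphi_1>0$ on $\mathbb{R}_+$, Theorem~\ref{thm:compBFCM} shows that $h\circ\varphi_1$ is CM, i.e.\ $(1+1/x)^{\alpha}=\varphi_1(x)^{-\alpha}$ is CM. I do not expect a genuine obstacle: the only real step is the algebraic identity $1+1/x=1/\varphi_1(x)$, after which the claim is an immediate consequence of the tools assembled in Section~\ref{sec:tools}; the resulting CM function has the expected singularity at $x=0$ (of order $x^{-\alpha}$), which is permitted for CM functions.
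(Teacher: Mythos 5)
Your proof is correct, but it takes a genuinely different route from the paper's. You write $1+1/x=1/\varphi_1(x)$ with $\varphi_1(y)=y/(y+1)$ the Debye complete Bernstein function, note that $\varphi_1$ is a positive Bernstein function and that $y\mapsto y^{-\alpha}$ is completely monotone for every $\alpha\ge 0$, and conclude by the composition theorem (Theorem~\ref{thm:compBFCM}); there is no circularity here, since that theorem is quoted from the literature and does not depend on the present lemma. The paper instead treats $0\le\alpha<1$ first, using the change of variables $t=1/(xy)$ to exhibit
$$\left(1+\frac{1}{x}\right)^{\alpha}=\frac{\alpha}{x^{\alpha}}\int_1^\infty \frac{\dd y}{y^{1+\alpha}\,(xy+1)^{1-\alpha}},$$
i.e.\ a product of the CM function $x^{-\alpha}$ with a positive superposition of the CM functions $x\mapsto (xy+1)^{\alpha-1}$, and then extends to arbitrary $\alpha\ge 0$ by writing $\alpha=n+\beta$ with $0<\beta<1$ and using closure of the CM class under products. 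Your argument is shorter and handles all $\alpha\ge 0$ uniformly, at the price of invoking the composition theorem as a black box; the paper's argument is more hands-on, resting only on Bernstein's theorem and the product rule for CM functions. Both are valid proofs of the lemma.
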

\begin{proof}
We begin with $0 \leq \alpha < 1$. Setting $t = 1/(x y)$ we have that
\begin{multline*}
\frac{\alpha}{x^\alpha} \int_1^\infty \frac{\dd y}{y^{1+\alpha} \,(xy + 1)^{1-\alpha}} = 
\alpha \int_0^{1/x} \frac{t^{\alpha-1}}{(1/t + 1)^{1-\alpha}} \, \dd t = \\
\alpha \int_1^{1+1/x} u^{\alpha-1}\, \dd u = \left(1 + \frac{1}{x}\right)^\alpha
\end{multline*}
Since for each fixed value of $y > 0$ the function $(x y + 1)^{\alpha-1}$ is CM,
the function $\left(1 + 1/x\right)^\alpha$ ($x > 0)$ is also CM.

The function $1 + 1/x$ is CM, hence for every positive integer $n$ 
the function $(1 + 1/x)^n$ is CM. We can now decompose any positive non-integer 
$\alpha$ into the sum $\alpha = n + \beta$, where $n$ is a positive integer 
and $0 < \beta < 1$. Consequently
$$(1 + 1/x)^\alpha \equiv (1 + 1/x)^n \, (1 + 1/x)^\beta$$
is CM because it is a product of two CM functions.
\end{proof}

\begin{prooftheorem} 
For $\mu > -1/2$ the MacDonald function has the following integral representation:
\begin{equation}
z^{1/2} \, K_\mu(z) = \sqrt{\frac{\pi}{2}} \frac{1}{\Gamma(1/2-\mu)}
\e^{-z}\, \int_0^\infty \e^{-s} \, s^{\mu-1/2} \,
\left(1 + \frac{s}{2 z}   \right)^{\mu-1/2} \, \dd s, \qquad z > 0
\end{equation}
(\citet{GradshteinRhyzhik}, 8:432:8). By Lemma~\ref{lem}
the integrand of the integral on the right-hand side is CM if $\mu \geq 1/2$. 
Hence the integral is the limit of 
sums of CM functions, therefore itself a CM function. Consequently, the function 
$z^{1/2}\, K_\mu(z)$ is the product of two CM functions, and thus it is CM too.
\end{prooftheorem}

We now note that 
$\tilde{u}(p,x) = p^{(d-3)/2} \, g(p)\, F(p)$.  We shall prove that $F(p)$ is 
the product of two 
CM functions of the argument $p$, viz.  $Q(p)^{-(d+1)/4}$ and $L(z) := z^{1/2} \, K_{d/2-1}(z)$ 
with $z := B(p)\,r$,  as well as a positive factor independent of $p$.
\begin{lemma}
If $Q$ is a CBF and $\alpha > 0$, then $Q(p)^{-\alpha}$ is CM.
\end{lemma}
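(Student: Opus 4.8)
The plan is to recognize $Q(p)^{-\alpha}$ as the composition of a completely monotone function of one real variable with the complete Bernstein function $Q$, and then to appeal to Theorem~\ref{thm:compBFCM}. So I would carry out two sub-tasks: first, show that $y\mapsto y^{-\alpha}$ is CM on $\mathbb{R}_+$ for every $\alpha>0$; second, observe that $Q$, being a CBF, is a BF by Theorem~\ref{thm:BFCBF}, and — being a non-zero Bernstein function — is strictly positive on $\mathbb{R}_+$, so Theorem~\ref{thm:compBFCM} applied with $f(y)=y^{-\alpha}$ and $g=Q$ gives at once that $Q(p)^{-\alpha}=f\bigl(Q(p)\bigr)$ is CM.

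For the first sub-task I would bootstrap from the facts already available. For $0<\alpha<1$, Remark~\ref{rem:1} says $y^{\alpha}$ is a CBF, hence a non-zero BF, so $y^{-\alpha}=1/y^{\alpha}$ is CM by Corollary~\ref{corr:BF}; likewise, applying the same corollary to the Bernstein function $y\mapsto y$ shows $1/y$ is CM. For a general $\alpha>0$ I would split $\alpha=n+\beta$ with $n$ a non-negative integer and $0\le\beta<1$, so that $y^{-\alpha}=(1/y)^{n}\,y^{-\beta}$ is a product of CM functions and hence CM by Theorem~\ref{thm:CMmult} (with the convention $y^{-0}\equiv 1$). As a shortcut one may instead simply invoke the integral representation $y^{-\alpha}=\Gamma(\alpha)^{-1}\int_0^\infty \e^{-yt}\,t^{\alpha-1}\,\dd t$, which displays $y^{-\alpha}$ outright as the Laplace transform of a positive measure.

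There is a second, essentially equivalent route that bypasses the composition theorem: for $0<\alpha<1$ the function $Q(p)^{\alpha}=(y^{\alpha})\circ Q$ is a composition of CBFs, hence a CBF by Theorem~\ref{thm:CBF}, hence a non-zero BF, so $1/Q(p)^{\alpha}$ is CM by Corollary~\ref{corr:BF}; the general case then follows by writing $\alpha=n+\beta$ and noting that $(1/Q)^{n}$ — a power of the CM function $1/Q$ — times $Q^{-\beta}$ is a product of CM functions. Either way the argument is short, since all the heavy lifting is done by the CBF machinery of Section~\ref{sec:tools}.

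I do not expect a genuine obstacle: the content of the lemma is almost formal once the CBF toolbox is in place. The two points needing a moment's care are (i) ensuring $Q$ is not the zero function, so that "non-zero BF'' applies — which is automatic in the intended application, where $Q(p)=a\,p+p\,\tilde{G}(p)$ with $a\ge 0$ and $G$ CM, whence $Q>0$ on $\mathbb{R}_+$ unless the constitutive law degenerates; and (ii) reducing non-integer exponents to the range $(0,1)$ via the $\alpha=n+\beta$ decomposition, since Remark~\ref{rem:1} and Corollary~\ref{corr:BF} are the tools available and the former is stated only for exponents in $(0,1)$.
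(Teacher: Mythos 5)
Your proposal is correct, and your second route is essentially the paper's own proof: decompose $\alpha = n + \beta$ with $n$ the integer part and $0 \le \beta < 1$, note that $Q^{-1}$ is CM (Theorem~\ref{thm:BFCBF} plus Corollary~\ref{corr:BF}) and that $Q^{\beta}$ is a CBF by composition (Theorem~\ref{thm:CBF}), hence $Q^{-\beta}$ is CM, and take the product. Your first route --- composing the CM function $y^{-\alpha}$ directly with the Bernstein function $Q$ via Theorem~\ref{thm:compBFCM} --- is a slightly more streamlined variant that avoids the integer/fractional split entirely, and your remark about needing $Q$ to be a \emph{non-zero} BF (so that positivity on $\mathbb{R}_+$ holds) makes explicit a hypothesis the paper leaves tacit.
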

\begin{proof}
Let $n$ be the integer part of $\alpha$, $\alpha = n + \beta$,
$0 \geq \beta < 1$. $Q(p)^{-1}$ is CM (by Theorem~\ref{thm:BFCBF} and 
Corollary~\ref{corr:BF}) and therefore also $Q(p)^{-n}$ is CM. By 
Theorem~\ref{thm:CBF} the function $Q(p)^\beta$ is a CBF,
hence $1/Q(p)^\beta$ is CM. Consequently $Q(p)^{-\alpha}$ is CM.
\end{proof}
The lemma implies that the factor $Q(p)^{-(d+1)/4}$ is CM.
Since the function $L$ is CM and we have already proved that $B(p)$ is BF, 
Theorem~\ref{thm:compBFCM} implies that $L(B(p)\,r)$ is a CM function 
of $p$. For $d \leq 3$ the factor $p^{(d-3)/2}$ is also CM. Consequently,
for $d \leq 3$ the solution $u(t,x)$ of Problem~II is non-negative. The 
solution of the same problem with an arbitrary source of the form 
$s(t)\, \delta(x)$ 
and $s(t) \geq 0$ can be obtained by a convolution of two non-negative 
functions and therefore is also non-negative.

For $d \leq 5$  Problem~I with $u_0 = \dot{u}_0 = 0$  
 has a non-negative solution if $c_1 > 0$. 
For $d \leq 7$ Problem~I has a non-negative solution if
$c_1 = 0$ and $c_2 > 0$.

For $d > 3$ the fractional integral 
$$ \I^\alpha \, u(t,x) = \frac{1}{\Gamma(\alpha)} 
\int_0^t (t-s)^{\alpha-1} \, f(s) \, \dd s, \qquad t > 0$$
is non-negative provided $\alpha \geq (d-3)/2$ and $u_0 = 0$ or
provided $\alpha \geq (d - 1)/2$. 

We summarize these results in a theorem.

\begin{theorem}
In a viscoelastic medium of dimension $d \leq 3$ with a constitutive relation
$$\sigma = a\, \dot{e} + G(t)\ast\dot{e}, \qquad a \geq 0; \quad G \in 
\mathfrak{F}$$
Problem~II as well as Problem~I with the initial condition $u_0 = 0$ have 
non-negative solutions. 

Under the same assumptions but for an arbitrary dimension
$d > 3$ certain indefinite fractional time integrals of the solution are 
non-negative. For zero initial data Problem~I has a non-negative solution
if $d \leq 5$ and $c_1 > 0$, or if $d \leq 7$, $c_1 = 0$ and $c_2 > 0$.
\end{theorem}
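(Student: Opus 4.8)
The plan is to reduce everything to the first Theorem of Section~\ref{sec:tools}: it suffices to prove that, for each fixed $x$, the Laplace transform $\tilde u_d(\cdot,x)$ is CM (this handles $d\le 3$), or that $p^{-\alpha}\,\tilde u_d(\cdot,x)$ is CM (this handles the fractional‑integral claims for $d>3$, since multiplication by $p^{-\alpha}$ on the transform side is the fractional integral $\I^\alpha$ on the time side). Starting from the MacDonald‑function representation derived at the beginning of Section~\ref{sec:multidim}, I would rewrite the solution in the factored form
$$\tilde u_d(p,x) = c_d\, r^{-(d-1)/2}\; p^{(d-3)/2}\; g(p)\; Q(p)^{-(d+1)/4}\; L\!\left(B(p)\,r\right),$$
where $c_d = \rho^{(d-3)/4}/(2\upi)^{d/2}>0$, $r=\vert x\vert$, $L(z):=z^{1/2}K_{d/2-1}(z)$, $B(p)=\rho^{1/2}p/Q(p)^{1/2}$, and $Q$ is as in \eqref{eq:defQ}; the only computation here is absorbing the prefactor $\rho^{d/4-1/2}p^{d/2-1}Q(p)^{-(d/4+1/2)}r^{-(d/2-1)}$ together with the $(B(p)r)^{-1/2}$ extracted from $K_{d/2-1}$ into the powers $p^{(d-3)/2}$ and $Q(p)^{-(d+1)/4}$.

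Next I would classify each factor using the results already established. The theorem opening Section~4 gives $Q\in\mathfrak C$; then $Q^{1/2}$ is a CBF by Theorem~\ref{thm:CBF} (composition with $y^{1/2}$, a CBF by Remark~\ref{rem:1}), so $B(p)=\rho^{1/2}p/Q(p)^{1/2}$ is a CBF by part~1 of Theorem~\ref{thm:CBF}, hence a BF by Theorem~\ref{thm:BFCBF}, with $B(p)>0$ for $p>0$. The cited MacDonald theorem — together with its footnote for $\mu\ge 0$, and the elementary identity $z^{1/2}K_{-1/2}(z)=\sqrt{\upi/2}\,\e^{-z}$ for $d=1$ — shows $L$ is CM, so $L(B(\cdot)\,r)$ is CM by Theorem~\ref{thm:compBFCM}; the Lemma just before the theorem gives that $Q(p)^{-(d+1)/4}$ is CM; and $g$ is CM (for Problem~II, $g\equiv 1$; for Problem~I with $u_0=0$, $g(p)=c_1/p+c_2/p^2+\dot u_0$, a sum of CM functions under the tacit non‑negativity of the data). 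For $d\le 3$ the remaining factor $p^{(d-3)/2}=p^{-(3-d)/2}$ is CM, so by Theorem~\ref{thm:CMmult} the product $\tilde u_d(\cdot,x)$ is CM, whence $u_d(t,x)\ge 0$ by the first Theorem of Section~\ref{sec:tools}; the case of a general non‑negative source $s(t)\delta(x)$ then follows by a convolution of two non‑negative functions.

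For $d>3$ the factor $p^{(d-3)/2}$ is not CM, and this is precisely where the argument changes. Since $\widetilde{\I^\alpha u}(p,x)=p^{-\alpha}\tilde u_d(p,x)$, the power of $p$ multiplying the (manifestly CM) product $g(p)Q(p)^{-(d+1)/4}L(B(p)r)$ is at most $p^{(d-3)/2-\alpha}$ when $u_0=0$ and at most $p^{(d-1)/2-\alpha}$ in general (the worst terms of $g$ being the constant $\dot u_0$, resp. $p\,u_0$); demanding a non‑positive exponent yields the thresholds $\alpha\ge(d-3)/2$ and $\alpha\ge(d-1)/2$. For Problem~I with $u_0=\dot u_0=0$ one has $g(p)=c_1/p+c_2/p^2$, so $\tilde u_d(p,x)$ is a positive‑constant multiple of $\bigl(c_1\,p^{(d-5)/2}+c_2\,p^{(d-7)/2}\bigr)Q(p)^{-(d+1)/4}L(B(p)r)$; the power $p^{(d-5)/2}$ is CM exactly for $d\le 5$ and $p^{(d-7)/2}$ exactly for $d\le 7$, which gives the last two assertions after applying Theorem~\ref{thm:CMmult} and the first Theorem of Section~\ref{sec:tools} once more.

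The main obstacle, apart from the bookkeeping of exponents, is the rigorous derivation of the factored formula above as an identity of tempered distributions in $x$ — in particular its reading near $r=0$, where $L(B(p)r)$ degenerates and the Fourier‑inversion formula of \citet{Gelfand} must be interpreted distributionally — and making sure each completely‑monotone building block ($z^{1/2}K_\mu(z)$, $Q^{-\alpha}$, $(1+1/x)^\alpha$) is invoked at an admissible exponent: namely that $\mu=d/2-1$ meets the hypothesis $\mu\ge 0$ (indeed $\ge 1/2$ for $d\ge 3$) of the MacDonald theorem for every $d\ge 2$, the case $d=1$ being covered separately by the explicit form of $K_{-1/2}$, equivalently by the $\exp(-B(p)\vert x\vert)$ computation of Section~4.
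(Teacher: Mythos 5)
Your proposal is correct and follows essentially the same route as the paper: the same factorization $\tilde u_d = C(r)\,p^{(d-3)/2}\,g(p)\,Q(p)^{-(d+1)/4}\,L(B(p)r)$, the same appeal to the Miller--Samko theorem, the lemma on $Q^{-\alpha}$, Theorem~\ref{thm:compBFCM} for $L(B(\cdot)r)$, and the same exponent bookkeeping for the fractional integrals and the $c_1,c_2$ cases. If anything, you are more careful than the paper about the admissible range of $\mu=d/2-1$ for $d=1,2$ (the explicit $K_{-1/2}$ identity and the $\mu\ge 0$ footnote), a point the paper glosses over.
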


\section{Positivity properties of vector-valued fields.}
\label{sec:vector}

It is interesting to examine the implications of CM relaxation kernels on positivity properties of vector fields.
We shall prove that in a simple model complete monotonicity of a relaxation kernel implies that the Green's function
is positive semi-definite.

 Unfortunately the tools developed in Sec.~\ref{sec:tools} 
fail for matrix-valued CM and Bernstein functions $Q(p)$ which do not commute with their derivatives.
In particular, the product of two non-commuting matrix-valued functions need not be a CM function and
the function $f\circ \mat{G}$, where $f$ is CM and $\mat{G}$ is a matrix-valued BF, need not be CM.

\begin{definition}
A matrix-valued function  $\mat{F}: \mathbb{R}_+ \rightarrow \mathbb{R}^{n\times n}$ is said to be a 
CM function if it is infinitely differentiable and the matrices $ (-1)^n\, \D^n \, \mat{F}(y)$ are 
positive semi-definite for all $y > 0$.
\end{definition}
\begin{definition}
A matrix-valued Radon measure $\mat{M}$ is said to be positive if the matrix 
$\langle \vec{v}, \int_{[0,\infty[}\; f(y) \, \mat{M}(\dd y)\, \vec{v} \rangle\geq 0$
for every vector $\vec{v} \in \mathbb{R}^n$ and every non-negative function $f$ 
on $\overline{\mathbb{R}_+}$ with compact support. 
\end{definition}
It is convenient to eliminate matrix-valued Radon measures by applying the following lemma \citep{HanSerVE}:
\begin{lemma} 
Every matrix-valued Radon measure $\mat{M}$ has the form $\mat{M}(\dd x)=\mat{K}(x)\, m(\dd x)$, where $m$ is a positive Radon measure, 
while $\mat{K}$ is a matrix-valued function defined, bounded and positive semi-definite on $\mathbb{R}_+$ except on a subset $E$
such that $m(E) = 0$. 
\end{lemma}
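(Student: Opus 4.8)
We may assume that $\mat M$ is a \emph{positive} matrix-valued Radon measure taking values in the symmetric $n\times n$ matrices, since otherwise the asserted positive semi-definiteness of $\mat K$ cannot hold. The plan is to take $m$ to be the trace measure of $\mat M$ and to obtain $\mat K$ as the matrix whose entries are the Radon--Nikodym densities of the entries of $\mat M$ with respect to $m$.

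First I would extract from the positivity hypothesis, by inner and outer regularity, its consequences on Borel sets: approximating the indicator $\chi_A$ of a bounded Borel set $A\subset\mathbb{R}_+$ by non-negative continuous functions of compact support and passing to the limit in the defining inequality yields (i) the numerical matrix $\bigl(M_{ij}(A)\bigr)_{i,j=1}^{n}$ is symmetric and positive semi-definite for every bounded Borel $A$, and (ii) for each $\vec v\in\mathbb{R}^n$ the scalar set function $A\mapsto\langle\vec v,\mat M(A)\,\vec v\rangle$ is a positive Radon measure. From (ii) with $\vec v=\vec{e}_i$ each diagonal entry $M_{ii}$ is a positive Radon measure, so $m:=\sum_i M_{ii}$ is a positive Radon measure, finite on compacts. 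From (i) and Cauchy--Schwarz for positive semi-definite matrices, $|M_{ij}(A)|^2\le M_{ii}(A)\,M_{jj}(A)\le m(A)^2$, hence $|M_{ij}(A)|\le m(A)$ for every Borel $A$; thus each $M_{ij}$ is dominated by $m$, in particular absolutely continuous with respect to $m$. The scalar Radon--Nikodym theorem then gives $K_{ij}\in L^1_{\mathrm{loc}}(m)$ with $M_{ij}(\dd x)=K_{ij}(x)\,m(\dd x)$, and the bound $|M_{ij}(A)|\le m(A)$ forces $|K_{ij}(x)|\le 1$ for $m$-a.e.\ $x$; moreover $M_{ij}=M_{ji}$ gives $K_{ij}=K_{ji}$ $m$-a.e. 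Set $\mat K(x):=\bigl(K_{ij}(x)\bigr)_{i,j}$, defined $m$-a.e.; its entries lie in $[-1,1]$ a.e., so $\mat K$ is bounded.

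To obtain positive semi-definiteness off a single null set, fix $\vec v\in\mathbb{R}^n$. By (ii), the positive measure $\langle\vec v,\mat M(\cdot)\,\vec v\rangle$ has density $\langle\vec v,\mat K(x)\,\vec v\rangle$ with respect to $m$, so $\langle\vec v,\mat K(x)\,\vec v\rangle\ge 0$ for $m$-a.e.\ $x$; denote the exceptional set by $E_{\vec v}$. Let $\{\vec v_k\}_{k\ge 1}$ be a countable dense subset of $\mathbb{R}^n$ and put $E:=N\cup\bigcup_{k\ge1}E_{\vec v_k}$, where $N$ is an $m$-null set such that, for $x\notin N$, every $K_{ij}(x)$ is defined, $K_{ij}(x)=K_{ji}(x)$, and $|K_{ij}(x)|\le 1$. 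Then $m(E)=0$, and for $x\notin E$ the symmetric matrix $\mat K(x)$ satisfies $\langle\vec v_k,\mat K(x)\,\vec v_k\rangle\ge 0$ for all $k$; since $\vec v\mapsto\langle\vec v,\mat K(x)\,\vec v\rangle$ is continuous and $\{\vec v_k\}$ is dense, $\mat K(x)$ is positive semi-definite. One then also has the sharper bound $\|\mat K(x)\|\le\mathrm{tr}\,\mat K(x)=\sum_i K_{ii}(x)=1$ for $x\notin E$, since $\sum_i M_{ii}=m$. This is exactly the claim.

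The only point where the hypothesis is genuinely used — and the main obstacle — is the passage (i)--(ii) from the test-function definition of positivity of $\mat M$ to the evaluation of $\mat M$ on Borel sets and to positivity of the scalar measures $\langle\vec v,\mat M(\cdot)\,\vec v\rangle$; the approximation of $\chi_A$ by admissible test functions must be carried out uniformly over the finitely many matrix entries, which is routine measure theory but is where the real content sits. Everything afterwards — Radon--Nikodym, the domination bound, and the fusion of the countably many $m$-null sets $E_{\vec v_k}$ into $E$ — is standard. Alternatively, both the density and its a.e.\ positive semi-definiteness can be read off simultaneously from the Lebesgue differentiation theorem for Radon measures on $\mathbb{R}$, the limit of the positive semi-definite difference quotients $\bigl(M_{ij}(]x-r,x+r[)\bigr)/m(]x-r,x+r[)$ being positive semi-definite; I find the Radon--Nikodym route above slightly cleaner.
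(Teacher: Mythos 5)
Your argument is correct, and it is the standard proof of this kind of polar-decomposition statement: take $m$ to be the trace measure $\sum_i M_{ii}$, dominate each entry $M_{ij}$ by $m$ via the Cauchy--Schwarz inequality for positive semi-definite matrices, apply the scalar Radon--Nikodym theorem entrywise, and recover positive semi-definiteness of $\mat K(x)$ off a single $m$-null set by running the quadratic-form inequality over a countable dense set of vectors. Note that the paper itself gives no proof of this lemma --- it is quoted from the reference \citep{HanSerVE} --- so there is nothing to compare against line by line; the proof in that reference is essentially the one you give. One small caveat: in step (i) you assert that $\mat M(A)$ is \emph{symmetric}, but the paper's definition of a positive matrix-valued Radon measure only constrains the quadratic form $\langle\vec v,\mat M(A)\vec v\rangle$, which controls the symmetric part of $\mat M(A)$ and says nothing about its antisymmetric part. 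You should either state explicitly that $\mat M$ is assumed symmetric-matrix-valued (which is the paper's implicit convention, cf.\ its later remark that a real positive semi-definite $\mat Q$ ``is symmetric''), or replace the trace measure by $\sum_{i,j}\vert M_{ij}\vert$ and accept that only the symmetric part of $\mat K$ is positive semi-definite. With that convention made explicit, the proof is complete; the concluding bound $\Vert\mat K(x)\Vert\le\mathrm{tr}\,\mat K(x)=1$ is a nice sharpening of the stated boundedness.
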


\begin{theorem} \citep{GripenbergLondenStaffans}
A matrix-valued function $\mat{F}: \mathbb{R}_+ \rightarrow \mathbb{R}^{n\times n}$ is CM if and only if it is
the Laplace transform of a positive matrix-valued Radon measure.
\end{theorem}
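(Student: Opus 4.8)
The plan is to reduce the matrix-valued statement to the scalar Bernstein theorem (Theorem~\ref{thm:Bernstein}) by testing $\mat{F}$ against a fixed vector, and then to reassemble the resulting scalar Radon measures into a single positive matrix-valued Radon measure by polarization. For the easy direction, suppose $\mat{F}(y) = \int_{[0,\infty[} \e^{-ry}\, \mat{M}(\dd r)$ with $\mat{M}$ positive; writing $\mat{M}(\dd r) = \mat{K}(r)\, m(\dd r)$ with $\mat{K}$ bounded and positive semi-definite $m$-a.e.\ as in the preceding lemma, for every $\vec{v} \in \mathbb{R}^n$ the scalar function $f_{\vec{v}}(y) := \langle \vec{v}, \mat{F}(y)\, \vec{v}\rangle = \int_{[0,\infty[} \e^{-ry}\, \langle \vec{v}, \mat{K}(r)\, \vec{v}\rangle\, m(\dd r)$ is the Laplace transform of a positive Radon measure. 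Differentiation under the integral sign is legitimate for each fixed $y > 0$ by dominating $r^n\,\e^{-ry}$ by a constant multiple of $\e^{-y'r}$ for any $0 < y' < y$, since $\mat{F}(y')$ is finite; it yields $\langle \vec{v}, (-1)^n\, \D^n \mat{F}(y)\, \vec{v}\rangle \geq 0$ for all $n$ and all $\vec{v}$, hence $\mat{F}$ is CM.

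Conversely, let $\mat{F}$ be CM. Taking $n = 0$ shows each $\mat{F}(y)$ is positive semi-definite, in particular symmetric, so $\mat{F}$ is determined by the quadratic form $f_{\vec{v}}(y) := \langle \vec{v}, \mat{F}(y)\, \vec{v}\rangle$. Since $(-1)^n f_{\vec{v}}^{(n)}(y) = \langle \vec{v}, (-1)^n\, \D^n \mat{F}(y)\, \vec{v}\rangle \geq 0$, each $f_{\vec{v}}$ is a scalar CM function, so by Theorem~\ref{thm:Bernstein} it has a unique representing positive Radon measure $\mu_{\vec{v}}$. For arbitrary $\vec{v},\vec{w}$ the polarization identity $\langle \vec{v}, \mat{F}(y)\, \vec{w}\rangle = \tfrac14\bigl(f_{\vec{v}+\vec{w}}(y) - f_{\vec{v}-\vec{w}}(y)\bigr)$ exhibits $y \mapsto \langle \vec{v}, \mat{F}(y)\, \vec{w}\rangle$ as a difference of CM functions, whose unique representing (signed) Radon measure is $\mu_{\vec{v},\vec{w}} := \tfrac14(\mu_{\vec{v}+\vec{w}} - \mu_{\vec{v}-\vec{w}})$; the uniqueness clause in Bernstein's theorem forces $(\vec{v},\vec{w}) \mapsto \mu_{\vec{v},\vec{w}}$ to be symmetric and bilinear. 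Hence for each Borel $A \subseteq [0,\infty[$ there is a symmetric matrix $\mat{M}(A)$ with $\langle \vec{v}, \mat{M}(A)\, \vec{w}\rangle = \mu_{\vec{v},\vec{w}}(A)$, and $\mat{M}$ is positive because $\langle \vec{v}, \int_{[0,\infty[} g(y)\, \mat{M}(\dd y)\, \vec{v}\rangle = \int_{[0,\infty[} g\, \dd\mu_{\vec{v}} \geq 0$ for $g \geq 0$. Finally $\langle \vec{v}, \mat{F}(y)\, \vec{w}\rangle = \int_{[0,\infty[}\e^{-ry}\,\mu_{\vec{v},\vec{w}}(\dd r) = \langle \vec{v}, \bigl(\int_{[0,\infty[}\e^{-ry}\,\mat{M}(\dd r)\bigr)\vec{w}\rangle$ for all $\vec{v},\vec{w}$, so $\mat{F}(y) = \int_{[0,\infty[}\e^{-ry}\,\mat{M}(\dd r)$.

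I expect the main obstacle to be the bookkeeping in the last step: verifying that the pointwise-defined set function $A \mapsto \mat{M}(A)$ is genuinely a matrix-valued Radon measure — countably additive and finite on compact sets — rather than merely a finitely additive object. Countable additivity descends from that of the scalar measures $\mu_{\vec{e}_i,\vec{e}_j}$, and finiteness on compacts follows from the Cauchy--Schwarz bound $|\langle \vec{v}, \mat{M}(A)\, \vec{w}\rangle|^2 \leq \langle \vec{v}, \mat{M}(A)\, \vec{v}\rangle\,\langle \vec{w}, \mat{M}(A)\, \vec{w}\rangle$, valid because each $\mat{M}(A)$ is positive semi-definite, together with the local finiteness of the Radon measures $\mu_{\vec{e}_i}$. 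Everything rests on the uniqueness assertion in the scalar Bernstein theorem, which is what makes the polarization construction well defined and linear.
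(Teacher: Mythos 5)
The paper offers no proof of this statement at all --- it is quoted verbatim from Gripenberg, Londen and Staffans --- so there is nothing internal to compare your argument against; what you have written is a self-contained proof of a result the authors merely cite. Your route is the standard and correct one: reduce to the scalar Bernstein theorem (Theorem~\ref{thm:Bernstein}) by testing against vectors, polarize, use the uniqueness clause to make $(\vec{v},\vec{w})\mapsto\mu_{\vec{v},\vec{w}}$ symmetric and bilinear, and assemble the matrix-valued measure setwise; the easy direction via the factorization $\mat{M} = \mat{K}\,m$ from the preceding lemma and dominated differentiation under the integral is likewise sound. Two points deserve to be made explicit. First, the polarization identity requires each $\mat{F}(y)$ to be \emph{symmetric}; the $n=0$ condition gives this only under the convention that ``positive semi-definite'' for a real matrix includes symmetry (the paper adopts exactly this convention when it asserts that $\mat{Q}$ is symmetric), and without it the quadratic forms $f_{\vec{v}}$ determine only the symmetric part of $\mat{F}$, so the theorem itself needs that convention. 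Second, when forming $\mu_{\vec{v}+\vec{w}}-\mu_{\vec{v}-\vec{w}}$ you should restrict to bounded Borel sets to avoid an $\infty-\infty$ ambiguity for unbounded $A$; since both measures are Radon this is harmless and is essentially the bookkeeping you already flag. With those caveats the argument is complete and correct.
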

The following corollary will be applied to Green's functions:
\begin{corollary}
If $\tilde{\mat{R}}(p) := \int_0^\infty \e^{-p t} \, \mat{R}(t) \, \dd t$ is a matrix-valued CM function then
$\mat{R}(t)$ is positive semi-definite for $t > 0$.
\end{corollary}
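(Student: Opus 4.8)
The plan is to follow verbatim the scheme of the first theorem of Section~\ref{sec:tools}, upgrading Bernstein's theorem to its matrix-valued form (the theorem just quoted from \citep{GripenbergLondenStaffans}) and scalar positivity to positive semi-definiteness. First I would apply that theorem to $\tilde{\mat R}$: since it is a matrix-valued CM function, there is a positive matrix-valued Radon measure $\mat M$ with
\begin{equation*}
\tilde{\mat R}(p) = \int_{[0,\infty[} \e^{-p t}\, \mat M(\dd t), \qquad \re p > 0 .
\end{equation*}

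Next I would identify $\mat M$ with $\mat R(t)\,\dd t$. Writing the identity entry by entry, $\int_0^\infty \e^{-pt} R_{ij}(t)\,\dd t = \int_{[0,\infty[} \e^{-pt}\, M_{ij}(\dd t)$ for $\re p > 0$, and the scalar uniqueness theorem for the Laplace transform forces $M_{ij}(\dd t) = R_{ij}(t)\,\dd t$; in particular $\mat M$ carries no atom at $0$ and $\mat M(\dd t) = \mat R(t)\,\dd t$. Plugging this into the defining property of a positive matrix-valued Radon measure gives, for every $\vec v \in \mathbb{R}^n$ and every non-negative $f$ with compact support in $\mathbb{R}_+$,
\begin{equation*}
\int_0^\infty f(t)\, \langle \vec v, \mat R(t)\,\vec v\rangle\, \dd t \ \geq\ 0 .
\end{equation*}
Since $f$ is arbitrary, $\langle \vec v, \mat R(t)\,\vec v\rangle \geq 0$ for Lebesgue-almost every $t > 0$. (Equivalently, one may feed $\mat M(\dd t) = \mat R(t)\,\dd t$ into the preceding lemma of \citep{HanSerVE}, write $\mat R(t)\,\dd t = \mat K(t)\, m(\dd t)$ with $\mat K$ positive semi-definite off an $m$-null set, and read off the same a.e.\ statement.)

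The routine parts are the entrywise uniqueness step and the density step; the one point that needs care — exactly as in the scalar theorem, where continuity of $u(\cdot,x)$ was invoked — is the passage from an almost-everywhere inequality to a pointwise one. This relies on $\mat R$ being continuous on $\mathbb{R}_+$, which is the case for the Green's functions to which the corollary is applied (there $\mat R$ is in fact smooth away from the wavefront). With that regularity, $\langle \vec v, \mat R(t)\,\vec v\rangle \geq 0$ extends to all $t > 0$ and for all $\vec v$, i.e.\ $\mat R(t)$ is positive semi-definite for every $t > 0$; absent continuity, the conclusion is to be understood for almost every $t > 0$.
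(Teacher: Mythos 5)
Your argument is correct and is exactly the route the paper intends: it leaves the corollary unproved as an immediate consequence of the preceding theorem from \citep{GripenbergLondenStaffans}, and the natural filling-in is precisely your transcription of the scalar Theorem at the start of Section~\ref{sec:tools} (matrix Bernstein representation, entrywise Laplace uniqueness to identify the measure with $\mat{R}(t)\,\dd t$, then positivity of the measure plus continuity of $\mat{R}$ to pass from almost-everywhere to pointwise positive semi-definiteness). Your explicit flagging of the continuity hypothesis is a point the paper glosses over in the matrix case but does invoke in the scalar proof, so it is consistent with the authors' reasoning rather than a deviation from it.
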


\begin{definition}
A matrix-valued function  $\mat{G}: \mathbb{R}_+ \rightarrow \mathbb{R}^{n\times n}$ is said to be a Bernstein function (BF)
if $\mat{G}(y)$  is differentiable and positive semi-definite for all $y > 0$ and its derivative $\D \mat{G}$ is CM.
\end{definition}

\begin{definition}
A matrix-valued function  $\mat{H}: \mathbb{R}_+ \rightarrow \mathbb{R}^{n\times n}$ is said to be a complete Bernstein function (CBF)
if $\mat{H}(y) = y^2\, \tilde{\mat{G}}(y)$, where $\mat{G}$ is an $n \times n$ matrix-valued BF.
\end{definition}

The integral representation \eqref{eq:CBFintegral} of a CBF remains valid except that the Radon measure has to be replaced by a positive matrix-valued
Radon measure $\mat{N}(\dd r) = \mat{K}(r) \,\nu(\dd r)$:
\begin{equation} \label{eq:CBFintegralM}
\mat{H}(y) = \mat{B} +  y\, \mat{A} + y \int_{]0,\infty[}\; \frac{\mat{K}(r)\,\nu(\dd r)}{r + y}
\end{equation}
where the Radon measure $\nu$ satisfies the inequality 
\begin{equation} 
\int_{[0,\infty[}\; \frac{\nu(\dd r)}{1 + r} < \infty
\end{equation}
the matrix-valued function $\mat{K}(r)$ is positive semi-definite and bounded $\nu$-almost everywhere on $\mathbb{R}_+$
while $\mat{A}$, $\mat{B}$ are two positive semi-definite matrices. 
Every matrix-valued CBF $\mat{H}$ can be expressed in the form 
\begin{equation} \label{eq:CBFLICM-M}
\mat{H}(y) = y \, \tilde{\mat{F}}(y) +  y\, \mat{A}
\end{equation}
where $\mat{F}$ is a matrix-valued LICM function.

We now consider the following problem
\begin{equation}
\rho\, \D^2 \,\mathcal{G} = \mat{A}\,\D\, \mathcal{G} +  \mat{G}\ast \nabla^2\, \D\, \mathcal{G} + \delta(t) \,\delta(x)\,\mat{I}, 
\qquad t \geq 0, \quad x \in \mathbb{R}^d
\end{equation}
where $\mat{A}$ is a positive semi-definite $n \times n$ matrix and $\mat{G}$ is an $n \times n$ matrix-valued relaxation modulus.

If the relaxation modulus $\mat{G}$ is a CM matrix-valued function then the function $\mat{Q}(p) := p \, \tilde{\mat{G}}(p)$ 
is a matrix-valued CBF. The function $\mat{Q}$ is real and positive semi-definite, hence it is symmetric and has $n$ 
eigenvalues $q_n(p)$ and $n$ eigenvectors $\vec{e}_k$, $k = 1, \ldots, n$. We shall now assume that the eigenvectors are constant:
$$\mat{Q}(p) = \sum_{k=1}^n q_k(p) \, \vec{e}_k \otimes  \vec{e}_k$$

It is easy to see that the functions $q_k$, $k = 1,\ldots, n$, are CBFs.

The Laplace transform $\tilde{\mathcal{G}}(p,x)$ of the Green function is given by the
formula
\begin{multline*}
\tilde{\mathcal{G}}(p,x) = \frac{1}{(2 \upi)^d} \int \e^{\ii \langle \vec{k}, x \rangle} \left[ p^2\, \mat{I} + 
\vert \vec{k} \vert^2\, \mat{Q}(p)\right]^{-1} \, \dd_d k \equiv \\ \sum_{k=1}^n \frac{1}{(2 \upi)^d} \int \e^{\ii \langle \vec{k}, x \rangle} 
 \left[ p^2 + \vert \vec{k} \vert^2 \, q_k(p)\right]^{-1} \,\vec{e}_k \otimes  \vec{e}_k \equiv \\
\sum_{k=1}^n  g_k(p) \, \vec{e}_k \otimes  \vec{e}_k
\end{multline*}
where 
$$g_k(p) := \frac{\rho^{d/4-1/2} \,p^{d/2-1}}{(2 \upi)^{d/2}\, q_k(p)^{d/4+1/2}}\, 
r^{-(d/2-1)}\, K_{d/2-1}\left(B_k(p)\, r\right)$$
and
$B_k(p) := \rho^{1/2}\,p/q_k(p)^{1/2}$, $k = 1, \ldots, n$. Assume for definiteness that  $d \leq 3$. The argument of Sec.~\ref{sec:multidim} 
now leads to the conclusion that 
the functions $g_k$, $k = 1,\ldots, n$, are CM, hence the function $\tilde{\mathcal{G}}(\cdot,x)$ is a matrix-valued CM function and therefore
the Green function $\mathcal{G}(t,x)$ is positive semi-definite for $t \geq 0$, $x \in \mathbb{R}^d$. 
In particular, we have the following theorem:
\begin{theorem}
Let $\rho \in \mathbb{R}_+$, 
$d \leq 3$, $\vec{s}(t,x) = \delta(t)\, \delta(x)\, \vec{w}$, where $\vec{w} \in \mathbb{R}^n$. 

If $\mat{G}(s) = \sum_{k=1}^n G_k(s) \, \vec{e}_k \otimes  \vec{e}_k$ and $\mat{A} = \sum_{k=1}^n a_k \, \vec{e}_k \otimes  \vec{e}_k$
with CM functions $G_k$ and real numbers $a_k \geq 0$, $k = 1, \ldots, n$, then 
the solution $\vec{u}$ of the problem
$$\rho\, \D^2\, \vec{u} = \mat{A} \, \D\, \vec{u} + \mat{G}\ast \nabla^2 \, \D\,\vec{u} + \vec{s}(t,x), \qquad t\geq 0, \quad x \in \mathbb{R}^d$$ 
satisfies the inequality
\begin{equation}
\langle \vec{u}(t,x), \vec{w} \rangle \geq 0, \qquad t \geq 0, \quad x  \in \mathbb{R}^d.
\end{equation}
\end{theorem}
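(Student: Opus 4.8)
The plan is to use the constant–eigenframe hypothesis to split the vector equation of motion, read in the orthonormal basis $\{\vec{e}_k\}$, into $n$ independent scalar copies of Problem~II; to dispose of each of these by the positivity theorem of Section~\ref{sec:multidim} (valid for $d\leq 3$); and to recombine the components. This is, in effect, the argument of Section~\ref{sec:vector} specialised to the case of constant eigenvectors, which is precisely the case in which the system decouples.

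First I would expand $\vec{w} = \sum_{k=1}^{n} w_k\,\vec{e}_k$ and project the equation onto each $\vec{e}_j$. Since $\mat{A}$ and $\mat{G}(s)$ are symmetric with the \emph{same} constant eigenvectors, one has $\langle\vec{e}_j,\mat{A}\,\vec{v}\rangle = a_j\,\langle\vec{e}_j,\vec{v}\rangle$ and $\langle\vec{e}_j,\mat{G}(s)\,\vec{v}\rangle = G_j(s)\,\langle\vec{e}_j,\vec{v}\rangle$, and the operators $\D$, $\nabla^2$ and convolution in $t$ commute with the constant-coefficient projection $\vec{v}\mapsto\langle\vec{e}_j,\vec{v}\rangle$. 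Hence $u_j := \langle\vec{e}_j,\vec{u}\rangle$ solves
$$\rho\,\D^2 u_j = a_j\,\nabla^2\D u_j + G_j\ast\nabla^2\D u_j + w_j\,\delta(t)\,\delta(x),$$
which is exactly Problem~II with Newtonian viscosity $a_j\geq 0$, completely monotone relaxation modulus $G_j$, and source amplitude $w_j$. This step is where the hypothesis earns its keep: a constant eigenframe is precisely the situation in which the system diagonalises, so that the matrix-function calculus of Section~\ref{sec:tools} — which, as noted at the start of Section~\ref{sec:vector}, fails for matrices that do not commute with their derivatives — reduces to the scalar theory applied coordinate by coordinate.

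Next I would bring in the scalar result. Let $v_j(t,x)$ be the solution of Problem~II with unit source $\delta(t)\,\delta(x)$ and data $(a_j,G_j)$; by linearity $u_j(t,x) = w_j\,v_j(t,x)$. Since $d\leq 3$, the theorem of Section~\ref{sec:multidim} applies verbatim with $Q$ replaced by the CBF $q_j(p) := a_j\,p + p\,\tilde{G}_j(p)$: the Laplace transform of $v_j$ is the product of the completely monotone factors $p^{(d-3)/2}$, $q_j(p)^{-(d+1)/4}$, and $L(B_j(p)\,r)$, where $L(z) := z^{1/2}K_{d/2-1}(z)$ is CM, $B_j(p) := \rho^{1/2}p/q_j(p)^{1/2}$ is a BF, and the composition is CM by Theorem~\ref{thm:compBFCM}. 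Hence $v_j(\cdot,x)$ is CM, so $v_j(t,x)\geq 0$ for all $t\geq 0$, $x\in\mathbb{R}^d$.

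Recombining, $\langle\vec{u}(t,x),\vec{w}\rangle = \sum_{j=1}^{n} u_j(t,x)\,w_j = \sum_{j=1}^{n} w_j^2\,v_j(t,x)\geq 0$, which is the assertion. The same bookkeeping gives the stronger fact stated in Section~\ref{sec:vector}, that $\mathcal{G}(t,x) = \sum_j v_j(t,x)\,\vec{e}_j\otimes\vec{e}_j$ is positive semi-definite, being a sum of non-negative scalars times fixed positive semi-definite rank-one projectors; one then only needs the matrix Bernstein corollary to return to the time domain. I do not anticipate a genuine analytic obstacle: the content and the real difficulty lie in Sections~\ref{sec:tools} and~\ref{sec:multidim}, and the constant-eigenframe assumption is exactly what makes those scalar results available. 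The only minor points are the routine verification that the projected equation is Problem~II and the passage from $t>0$ to $t=0$ by continuity, just as in the scalar case.
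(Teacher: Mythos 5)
Your proposal is correct and follows essentially the same route as the paper: diagonalise in the constant eigenframe $\{\vec{e}_k\}$, reduce to $n$ independent copies of scalar Problem~II with data $(a_k, G_k)$, invoke the complete-monotonicity argument of the section on arbitrary dimension for each $g_k$ (valid for $d\leq 3$), and recombine to get $\langle\vec{u},\vec{w}\rangle=\sum_k w_k^2\,v_k\geq 0$, which is exactly how the paper derives the positive semi-definiteness of $\tilde{\mathcal{G}}(p,x)=\sum_k g_k(p)\,\vec{e}_k\otimes\vec{e}_k$ and then states the theorem as a corollary. The only point worth noting is that you silently read the $\mat{A}$ term as $\mat{A}\,\nabla^2\D\vec{u}$ so that it enters $q_k(p)=a_k p+p\tilde{G}_k(p)$ as in the scalar Problem~II, which is the intended reading (the paper itself omits $\mat{A}$ from $\mat{Q}$ in its displayed computation).
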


\section{Positivity in isotropic viscoelasticity.}

Consider now the Green's function $\mathcal{G}$ of a 3D isotropic viscoelastic medium. The function $\mathcal{G}$ 
is the solution of the initial-value 
problem:
\begin{multline}
\rho \, \D^2 \, \mathcal{G}_{kr}(t,x) = G(t)_{klmn}\ast \D\,\mathcal{G}_{mr,nl} + \delta(t)\, \delta(x) \, \delta_{kr}, \\ \qquad t > 0, 
\quad x \in \mathbb{R}^3, \quad k,r = 1,2,3
\end{multline}
with zero initial conditions, and
\begin{equation}
G_{klmn}(t) = \lambda(t) \, \delta_{kl}\, \delta_{mn} + \mu(t) \, \delta_{km}\,
\delta_{ln} + \mu(t) \, \delta_{kn}\, \delta_{lm}
\end{equation}
where the kernels $\lambda(t), \mu(t)$ are CM and $\rho \in \mathbb{R}_+$. The function $\tens{G}$ with the components $G_{klmn}$ 
takes values in the linear space $\Sigma$ of
symmetric operators on the space $S$ of symmetric $3 \times 3$ matrices. It is easy to see that under our hypotheses this function is CM:
$$ (-1)^n \, \langle \mat{e}_1, \tens{G}(t)\, \mat{e}_2 \rangle \geq 0 \qquad \text{for all $n = 0,1, 2 \ldots$} $$ 
for every $\mat{e}_1, \mat{e}_2 \in S$, where $\langle \mat{v}, \mat{w} \rangle := v_{kl}\, w_{kl}$ is the inner product on $S$. 

The Laplace transform $\tilde{\mathcal{G}}$ of $\mathcal{G}$ is given by the formula
$$\mathcal{G}(p,x) = \frac{1}{p\, \rho}  \left\{  
\nabla \otimes \nabla \Delta^{-1}\, F_\mathrm{L}(p,\vert x \vert)
+ \left[ \mat{I} -  \nabla \otimes \nabla \Delta^{-1}\right]\, F_\mathrm{T}(p,\vert x \vert) \right\}$$
where $\Delta := \nabla^2$, 
\begin{gather}
F_{\mathrm{L}}(p,r) := \frac{s_{\mathrm{L}}(p)^2 \,}{4 \upi r} \e^{-p^{1/2}\, s_{\mathrm{L}}(p)\, r} \label{eq:qq}\\
F_{\mathrm{T}}(p,r) := \frac{s_{\mathrm{T}}(p)^2 \,}{4 \upi r} \e^{-p^{1/2}\, s_{\mathrm{T}}(p)\, r} \label{eq:qr}
\end{gather}
and
\begin{gather}
s_{\mathrm{L}}(p)^2 := \frac{\rho}{\lambda(p) + 2 \mu(p)}\\
s_{\mathrm{T}}(p)^2 := \frac{\rho}{\mu(p)}
\end{gather}
Since $q_{\mathrm{L}}(p) = p/s_{\mathrm{L}}(p)$ and $q_{\mathrm{T}}(p) = p/s_{\mathrm{T}}(p)$ are CBFs, the functions $p/q_{\mathrm{L}}(p)^{1/2} =
p^{1/2}\, s_{\mathrm{L}}(p)$ and $p/q_{\mathrm{T}}(p)^{1/2} =
p^{1/2}\, s_{\mathrm{T}}(p)$ are BFs. hence the exponentials in eqs~(\ref{eq:qq}--\ref{eq:qr}) are CM functions of $p$.
 Moreover the functions $s_{\mathrm{L}}(p)^2$ and $s_{\mathrm{T}}(p)^2$ are CM. It follows that
the functions $F_{\mathrm{L}}(p,r)$ and $F_{\mathrm{T}}(p,r)$ are CM and therefore they are Laplace transforms of non-negative
functions $F_{\mathrm{L}}(t,r)$ and $F_{\mathrm{T}}(t,r)$. 
Their indefinite integrals $f_{\mathrm{L}}(t,r) := \int_0^t F_{\mathrm{L}}(s,r)\,
\dd s$ and $f_{\mathrm{T}}(t,r) := \int_0^t F_{\mathrm{L}}(s,r)\,
\dd s$ are also non-negative. 
The functions $h_{\mathrm{L}}(t,r) := \Delta^{-1}\, f_{\mathrm{L}}(t,r)\,
\dd s$, 
$h_{\mathrm{T}}(t,r) = \Delta^{-1}\, f_{\mathrm{T}}(t,r)$
 involve a convolution with a non-negative kernel and therefore 
are non-negative. The Green's function can be expressed in terms of these functions:
\begin{equation} \label{eq:GreenIsoVE}
\mathcal{G}(t,x) = \frac{1}{\rho} \left\{ \nabla \otimes \nabla h_{\mathrm{L}}(t,\vert x \vert) + 
\left[ \Delta\, \mat{I} - \nabla \otimes \nabla \right]\, h_\mathrm{T}(t,\vert x \vert) \right\}
\end{equation}

We shall use the notation $\vec{v} \geq 0$ if $v_k \geq 0$ for $k =1,2,3.$

\begin{theorem}
Let $\vec{u} = \nabla \phi + \nabla \times \vec{\psi}$ be the solution of the initial-value problem 
\begin{equation}
\rho\, \D^2 \,\vec{u} = \mat{G}\ast \nabla^2\, \D\, \vec{u} + \vec{s}(t,x), 
\qquad t \geq 0, \quad x \in \mathbb{R}^d
\end{equation}
with $\vec{u}(0,x) = 0 = \D \vec{u}(0,x)$ and
$\vec{s}(t,x) = \nabla f(t,x) + \nabla \times \vec{g}(t,x)$.

Then $\nabla f(t,x) \geq 0$ for all $t \geq 0$, $x \in \mathbb{R}^3$ implies 
that $\nabla \phi(t,x) \geq 0$ for all $t \geq 0$, $x \in \mathbb{R}^3$.

Similarly, $\nabla \times \vec{g}(t,x) \geq 0$ for all $t \geq 0$, $x \in \mathbb{R}^3$ implies 
that $\nabla \times \vec{\psi}(t,x) \geq 0$ for all $t \geq 0$, $x \in \mathbb{R}^3$.
\end{theorem}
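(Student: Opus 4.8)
The plan is to diagonalise the system by the Helmholtz decomposition, reducing each of the two assertions to the scalar positivity already established in Section~\ref{sec:multidim}. First I would record that, for the isotropic modulus $G_{klmn} = \lambda\,\delta_{kl}\delta_{mn} + \mu\,(\delta_{km}\delta_{ln} + \delta_{kn}\delta_{lm})$, the viscoelastic operator splits into a longitudinal and a transverse part,
\begin{equation*}
\bigl(\mat{G}\ast\nabla^2\,\D\,\vec{u}\bigr)_k = G_{klmn}\ast\D\,\partial_l\partial_n u_m = \bigl[(\lambda+\mu)\ast\D\,\nabla(\nabla\cdot\vec{u}) + \mu\ast\D\,\nabla^2\vec{u}\bigr]_k .
\end{equation*}
Substituting $\vec{u} = \nabla\phi + \nabla\times\vec{\psi}$ and $\vec{s} = \nabla f + \nabla\times\vec{g}$ and using $\nabla\cdot(\nabla\times\vec{\psi}) = 0$ and $\nabla\times\nabla\phi = 0$, the right-hand side separates into an irrotational piece and a solenoidal piece; equating these with the corresponding pieces on the left (legitimate by uniqueness of the Helmholtz decomposition for fields decaying at infinity) gives the decoupled problems
\begin{gather*}
\rho\,\D^2\,\nabla\phi = (\lambda + 2\mu)\ast\D\,\nabla^2\,\nabla\phi + \nabla f, \\
\rho\,\D^2\,\nabla\times\vec{\psi} = \mu\ast\D\,\nabla^2\,\nabla\times\vec{\psi} + \nabla\times\vec{g},
\end{gather*}
both with vanishing initial data.

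Next I would observe that $\lambda + 2\mu$ and $\mu$ are CM (a sum of CM functions is CM). Hence each Cartesian component of $\nabla\phi$ solves the scalar equation of Problem~II in dimension $d = 3$ with $a = 0$, relaxation modulus $\lambda + 2\mu$, and source equal to the corresponding component of $\nabla f$; similarly each component of $\nabla\times\vec{\psi}$ solves the same type of equation with modulus $\mu$ and source $\nabla\times\vec{g}$. By the positivity result of Section~\ref{sec:multidim} the scalar Green's function of such an equation is non-negative for $d\leq 3$, so the solution driven by a non-negative causal source is a space--time convolution of non-negative functions and therefore non-negative. Consequently $\nabla f\geq 0$ forces $\nabla\phi\geq 0$, and $\nabla\times\vec{g}\geq 0$ forces $\nabla\times\vec{\psi}\geq 0$, which is the assertion.

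Equivalently, one can argue directly from $\vec{u} = \mathcal{G}\ast\vec{s}$ with the Green's function~\eqref{eq:GreenIsoVE}: transferring the spatial derivatives of the kernels onto $\vec{s}$ by integration by parts, the longitudinal projector $\nabla\otimes\nabla\,\Delta^{-1}$ kills $\nabla\times\vec{g}$ and the transverse projector kills $\nabla f$, while $\Delta h_{\mathrm L} = f_{\mathrm L}$ and $\Delta h_{\mathrm T} = f_{\mathrm T}$ collapse the representation to $\vec{u} = \rho^{-1} f_{\mathrm L}\ast\nabla f + \rho^{-1} f_{\mathrm T}\ast(\nabla\times\vec{g})$; the first summand is the gradient $\nabla(\rho^{-1} f_{\mathrm L}\ast f)$ and the second is solenoidal, so by uniqueness $\nabla\phi = \rho^{-1} f_{\mathrm L}\ast\nabla f$ and $\nabla\times\vec{\psi} = \rho^{-1} f_{\mathrm T}\ast(\nabla\times\vec{g})$, and the conclusion again follows since $f_{\mathrm L}, f_{\mathrm T}\geq 0$ (already shown in the discussion preceding the theorem).

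The main difficulty I anticipate is not conceptual but one of care with the distributional bookkeeping: fixing a functional setting in which $\vec{u} = \mathcal{G}\ast\vec{s}$ is meaningful, in which $\Delta\,\Delta^{-1}$ acts as the identity on the causal fields involved, in which the integrations by parts that move $\nabla\otimes\nabla$ from kernel to source are valid, and in which the Helmholtz decomposition of $\vec{u}$ --- hence the identification of $\nabla\phi$ and $\nabla\times\vec{\psi}$ --- is unique. Once that is settled, every sign in the argument is inherited from the non-negativity of $f_{\mathrm L}$ and $f_{\mathrm T}$, which itself rests on the complete monotonicity of $\lambda$ and $\mu$ through the CBF calculus of Section~\ref{sec:tools}.
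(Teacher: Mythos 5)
Your proposal is correct, and your second (``equivalently'') argument --- pulling the projectors in the Green's function \eqref{eq:GreenIsoVE} onto the source and reading off $\nabla\phi = \rho^{-1}f_{\mathrm L}\ast\nabla f$ and $\nabla\times\vec{\psi} = \rho^{-1}f_{\mathrm T}\ast(\nabla\times\vec{g})$ --- is exactly the paper's proof. Your first route, decoupling the equation of motion into scalar longitudinal and transverse problems with moduli $\lambda+2\mu$ and $\mu$ and invoking the scalar positivity theorem of Section~\ref{sec:multidim}, is only a cosmetic variant, since it rests on the same non-negativity of the two scalar kernels that the paper packages as $f_{\mathrm L}\geq 0$ and $f_{\mathrm T}\geq 0$.
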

\begin{proof}
Substitute $\vec{u} = \nabla \phi + \nabla \times \vec{\psi}$, 
$\vec{s} = \nabla f + \nabla \times \vec{g}$ in the formula
$$\vec{u}(t,x) = \int_0^t \int \mathcal{G}(t-s,x-y)\, \vec{s}(s,y) \dd_3 x \,
\dd s$$
where $\mathcal{G}$ is given by \eqref{eq:GreenIsoVE}. Noting that $\Delta^{-1}$
is a convolution operator commuting with $\nabla$ and 
$\nabla \,\Delta^{-1}\, \Div \vec{s} = \nabla f$ we have
$$\nabla \phi(t,x) = \frac{1}{\rho} \, 
\int f_{\mathrm{L}}(t-s,\vert x-y \vert)\, (\nabla f)(s,y)\, \dd_3 y$$
We now note that
$\left[\mat{I} - \Delta^{-1}\, \nabla \otimes \nabla\right] \, \vec{s} =
\vec{s} - \nabla f = \nabla \times \vec{g}$. Hence  
$$\nabla \times \vec{\psi}(t,x) = \frac{1}{\rho} \, \int f_{\mathrm{T}}(t-s,\vert x-y \vert)\, \nabla \times \vec{g}(s,y) \, \dd_3 y$$
The functions $f_{\mathrm{L}}$ and $f_{\mathrm{T}}$ are non-negative, hence the thesis follows.

\end{proof}

\section{Concluding remarks.}

A non-negative source term excites a non-negative viscoelastic pulse.
This result holds for scalar waves and for scalar potentials 
under the usual assumption that the stress response is determined 
by a CM relaxation modulus $G$ or by a Newtonian term
or both connected in parallel. The CM property of the relaxation modulus is
a fairly general property of real viscoelastic media, equivalent to
the assumption that the relaxation spectrum is non-negative. A generalization
of positivity for vector-valued viscoelastic fields in viscoelastic
media with the P class anisotropy \citep{HanCole} is sketched.

A particular example of a CBF is the rational function
$F(p) = R_N(p)/S_M(p)$, where $R_N$ and $S_M$ are two polynomials with 
simple negative roots $\lambda_k$, $k = 1, \ldots, N$, $\mu_l$,
$l = 1,\ldots, M$, $M = N$ or $N+1$ satisfying the intertwining conditions:
$$0 \leq \lambda_1 < \mu_1 < \ldots \mu_N \,[\, < \lambda_{N+1}\,]$$
(the last inequality is applicable only if $M = N + 1$)
\citep{Duff69}. A more general CBF is obtained by substituting in $F$
the
CBF $p^\alpha$, with $0 < \alpha < 1$:
$$ F_\alpha(p) = R_N\left(p^\alpha\right)/S_M\left(p^\alpha\right)$$
(Theorem~\ref{thm:CBF}). The choice of 
$Q = F_\alpha$ corresponds to a generalized Cole-Cole model
of relaxation. For $N = M = 1$ the original Cole-Cole model  
\citep{ColeCole,BagleyTorvik1} is recovered.

Anisotropic effects can be introduced by replacing 
the operator $\nabla^2$ by $g^{kl} \,\partial_k \, \partial_l$.
If $h_{kl}\, g^{lm} = \delta^k_m$ then
\begin{equation}
\tilde{u}_d(p,x) = \sqrt{\det{g}} \;
\frac{\rho^{d/4-1/2} \, g(p) \,p^{d/2-1}}{(2 \upi)^{d/2}\, Q(p)^{d/4+1/2}}
r^{-(d/2-1)}\,
K_{d/2-1}\left(\rho^{1/2}\, p   r/Q(p)^{1/2}\right)
\end{equation}
If $Q$ is a CBF then $u(t,x) \geq 0$. 
with $r := \left[h_{kl} \, x^k \, x^l\right]^{1/2}$,
cf \citet{Gelfand}.

\end{document}